\definecolor{keywordcolor}{rgb}{0.7, 0.1, 0.1}   
\definecolor{commentcolor}{rgb}{0.4, 0.4, 0.4}   
\definecolor{symbolcolor}{rgb}{0.0, 0.1, 0.6}    
\definecolor{sortcolor}{rgb}{0.1, 0.5, 0.1}      
\definecolor{errorcolor}{rgb}{1, 0, 0}           
\definecolor{stringcolor}{rgb}{0.5, 0.3, 0.2}    
\theoremstyle{plain}
\newtheorem{theorem}{Theorem}[section]
\newtheorem{lemma}[theorem]{Lemma}
\theoremstyle{definition}
\newtheorem{example}[theorem]{Example}
\newtheorem{definition}[theorem]{Definition}
\newcommand{\figlabel}[3]{
    \node[figlabeltextstyle] (l) at ($ #1 + #2 $) {#3};
}
\newcommand{\x}{\times}
\newcommand{\T}{\mathcal{T}}
\newcommand{\2}{\mathbbm{B}}
\newcommand{\sem}[1]{\llbracket #1 \rrbracket}
\newcommand{\semm}[1]{\llbracket #1 \rrbracket^1}
\newcommand{\semn}[1]{\llbracket #1 \rrbracket}
\newcommand{\semnn}[1]{\llbracket #1 \rrbracket^1}
\newcommand{\semnop}{\operatorname{eval}}
\newcommand{\rep}[2]{{\Uparrow_{#1}#2}}
\newcommand{\repop}{\operatorname{\Uparrow}}
\newcommand{\sig}[2]{\Sigma_{#1}#2}
\newcommand{\sigop}{\operatorname{\Sigma}}
\newcommand{\streamsig}{S\to I\x V \x \2\x S}
\newcommand{\numbers}[1]{[#1]}
\newcommand{\mseq}[1]{[1,2,\ldots,#1]}
\newcommand{\signature}{\mathds{T}}
\newcommand{\streams}[2]{\mathcal{S}(#1, #2)}
\renewcommand{\i}{\operatorname{index}}
\renewcommand{\v}{\operatorname{value}}
\newcommand{\ready}{\operatorname{ready}}
\newcommand{\skipfn}{\operatorname{skip}}
\newcommand{\skp}{\operatorname{skip}}
\renewcommand{\L}{\mathcal{L}}
\renewcommand{\S}{\mathcal{S}}
\renewcommand{\T}{\mathcal{T}}
\newcommand{\map}{\operatorname{map}}
\newcommand{\sorts}{\mathcal{S}}
\newcommand{\reaches}[2]{#1 \to^* #2}
\newcommand{\id}{\operatorname{id}}
\newcommand{\remark}{}
\newcommand{\notation}{\emph{Notation. }}
\definecolor{mypink1}{rgb}{0.158, 0.788, 0.478}
\definecolor{mypurp}{rgb}{0.50, 0.0, 1.0}
\definecolor{keywordcolor}{rgb}{0.50, 0.0, .5}
\definecolor{mygray}{rgb}{0.5,0.5,0.5}
\newcommand{\ttt}[1]{\texttt{#1}}
\newcommand{\size}{\operatorname{size}}
\newcommand{\sizez}{\operatorname{size}^0}
\newcommand{\ttG}{\ttt{G}}
\newcommand{\note}{\paragraph{Note}}
\begin{document}

\title[Correct Compilation of Semiring Contractions]{Correct Compilation of Semiring Contractions}


\author{Scott Kovach}
\affiliation{
  \institution{Stanford University}            
  \country{USA}                    
}
\email{dskovach@stanford.edu}          

\author{Fredrik Kjolstad}
\affiliation{
  \institution{Stanford University}           
  \country{USA}                   
}
\email{kjolstad@stanford.edu}         

\begin{abstract}
We introduce a formal operational semantics that describes the fused execution of variable contraction problems,
which compute indexed arithmetic over a semiring and generalize sparse and dense tensor algebra, relational algebra, and graph algorithms.
We prove that the model is correct with respect to a functional semantics.
We also develop a compiler for variable contraction expressions and show that its performance is equivalent to a state-of-the art sparse tensor algebra compiler, while providing greater generality and correctness guarantees.
\end{abstract}



\keywords{Correct-by-construction compilation, streams, operational semantics, tensor algebra, relational algebra, functional programming.}

\maketitle

\section{Introduction}

Scientific computing and data analysis are critically reliant on performant execution.
The drive toward optimization has led to diverse systems that handle each step of a data processing pipeline:
databases for efficient storage and access,
domain-specific applications for physics simulation and optimization,
and kernel libraries for fundamental numeric operations like sparse matrix multiplication.
However, building a computation modularly out of well-tested components does not always cut it.
Recent work on domain-specific data processing systems has demonstrated that \emph{fusing} irregular and sparse computations across tasks or operations can produce radical performance improvements:
relational query optimization techniques can be applied to linear algebra~\cite{aberger2018levelheaded},
the composition of gradient descent with a relational query can be jointly optimized to drastically reduce costs~\cite{LMFAO},
and arithmetic operations on sparse matrices can be asymptotically faster with correct fusion~\cite{kjolstad2017tensor}.

Although these systems achieve new heights of performance, they are complex.
Users in scientific and safety-critical engineering domains may not trust that generated code is correct without a formal model of behavior. 
There is a mature literature on certified compilation and verification of imperative programs~
\cite{softwaremodelchecking}
\cite{chlipala2013bedrock}
\cite{leroy2016compcert},
but limited work that generalizes across high-performance programming systems for sparse computation like those above.

We will show, however, that a large class of data processing problems can be expressed within a simple language built from operators that multiply and aggregate multi-dimensional arrays.
We call these \emph{variable contraction problems} (\cref{sec:language}).
Our primary contribution is an operational semantics for fused programs that solve contraction problems.
By disentangling issues that arise in sparse computation, our semantics allows us to prove that
a broad class of efficient fused programs calculate their intended results.
Our approach is highly parametrized:
it can make use of arbitrary (sparse) data representations and express problems such as tensor contractions, relational queries,
path algorithms, probabilistic inference, boolean satisfiability, and more (\cref{sec:eg}).

The approach is based on three key ideas:

\begin{itemize}

\item
  Practical solutions must support an extensible set of sparse and dense data formats.
  To support all such representations,
  we introduce \emph{indexed streams} (\cref{sec:stream_def}), which can model arbitrary stateful, sequential computations.

\item
We decompose the notation of contraction problems using a language of \emph{contraction expressions} formed using three operators.
We implement these operators on (finitary descriptions of) streams (\cref{sec:combinators}).
This naturally gives rise to a compilation method that produces code with predictable time and memory usage.

\item
Hierarchical data structures are typically used to represent sparse objects such as multi-dimensional arrays and database indices.
We model these using \emph{nested streams} (\cref{sec:nested_streams}), which emit other streams as values.

\end{itemize}

\begin{figure}[t]
    \centering
    \begin{tikzpicture}
        \tikzstyle{figlabeltextstyle} = [black!50,font=\footnotesize,inner sep=1pt]
        \tikzstyle{figlabellinestyle} = [black!0,line cap=round]
        \node (L) at (0  ,  0)    {\Large{$\mathcal{L}$}};
        \node (S) at (2, -1) {\Large{$\mathcal{S}$}};
        \node (T) at (2,  1) {\Large{$\mathcal{T}$}};
        \path[] (L) edge[->,-stealth] (T);
        \path[] (L) edge[->,-stealth] (S);
        \path[] (S) edge[->,line width=1,-stealth] (T);
        \figlabel{(L)}{(-0.39,-0.4)}{\Cref{sec:semantics},}
        \figlabel{(L)}{(-0.39,-0.60)}{\cref{def:term_algebra}}
        \figlabel{(T)}{(0, 0.4)}{\Cref{sec:semantics}}
        \figlabel{(S)}{(0,-0.4)}{\Cref{sec:streams}}
        \node[font=\footnotesize] (M) at (2.5,0) {$\sem{-}$};
        \figlabel{(M)}{(0.3,-0.4)}{\Cref{sec:stream_evaluation}}
    \end{tikzpicture}
    \vspace{1em}
    \caption{\label{fig:overview}
      An intuitive overview of the language we handle is given in \cref{sec:semantics}.
      Our main conceptual contributions are defined in \cref{sec:streams,sec:nested_streams}.
      The commutativity of the above diagram, relating stream semantics and functional semantics for variable contraction problems,
      is proved in \cref{sec:proofs}.
    }
    \Description{Paper Overview}
\end{figure}
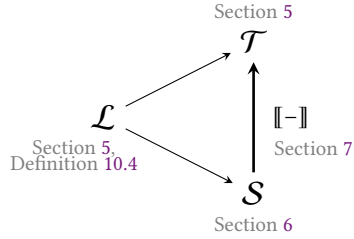

This approach is modular: rather than specifying and formalizing a monolithic compiler
from problems to streams, we specify a set of operators that can be used to algebraically
interpret a given problem as a stream. 
This set may be extended with new operators without disturbing pre-existing ones,
and new data formats are abstracted by the stream interface.

We apply the strategy to implement a prototype compiler.
As a result of its modular design, which directly aligns with the semantics definitions in \cref{sec:streams},
the implementation avoids the \emph{expression problem}~\cite{wadler-expression}: we can independently add new stream datatypes, new operators on streams,
and new compilation heuristics without modifying any pre-existing code.
Our implementation generates code that matches the performance of code generated by the state-of-the-art sparse tensor algebra compiler TACO
~\cite{kjolstad2017tensor}
on a subset of benchmarks, and it does so using two orders of magnitude less implementation code.

The correctness of the semantics is depicted in \cref{fig:overview}.
Here, $\L$ is a language for expressing contraction problems, 
$\T$ is the standard domain of functions,
and $\S$ is the domain of streams.
This commutative diagram relates the specification map $\L\to\T$, which interprets the operators on the domain of functions,
the map $\L\to \S$, which interprets the operators on the domain of streams, and the evaluation homomorphism $\S\to\T$.

Our primary contributions are
\begin{itemize}
\item
A high-level notation for specifying contraction problems (\cref{sec:semantics}),
\item
the \emph{indexed stream} representation of hierarchical, indexed data (\cref{sec:streams,sec:nested_streams}),
\item
a set of combinators which multiply and aggregate the values of indexed streams (\cref{sec:streams}),
\item
a proof that indexed stream combinators faithfully compute solutions to contraction problems (\cref{sec:proofs}), and
\item
a compiler that directly implements the stream model with performance comparable to the compiler TACO (\cref{sec:compilation}).
\end{itemize}

\section{Illustrative Examples}
\label{sec:eg}

The examples in this section are meant to illustrate the expressive range of the problem we consider before
formally defining it in \cref{sec:language}.
We do not address any details of efficient computation until \cref{sec:challenges}.

\begin{example}[Matrix Products]
  \label{eg:1}
  A matrix with $d_1$ rows, $d_2$ columns, and entries in a set $R$
  can be viewed as a function $I_1 \x I_2 \to R$, where $I_1 = \{1,2,\ldots,d_1\}$ and
  $I_2 = \{1,2,\ldots,d_2\}$.
  Supposing elements of $R$ can be added and multiplied, if we have another matrix $B : I_2\x I_3 \to R,$
  the \emph{matrix product} $AB$ is defined by the formula
  $$ (AB)(i_1,i_3) = \sum_{i_2\in I_2} A(i_1, i_2)B(i_2,i_3).$$

  \qed
\end{example}

\begin{example}[Relational Queries]
  \label{eg:2}
  \emph{Relational algebra} is concerned with the semantics of various operators that transform relations,
  especially 
  \emph{selection} ($\sigma_P$), \emph{projection} ($\pi_I$), and \emph{natural join} ($R\bowtie S$).

  A \emph{relation} $R$ between attribute sets $A,B$ is defined to be a subset of their cartesian product: $R\subseteq A\x B$.
  Equivalently, a relation is specified by an indicator function $R : A\x B\to \2$ to the two-element set $\2 = \{\bot,\top\}$.
  Here, $\bot$ denotes false and $\top$ denotes true;
  to obtain the corresponding subset of $A\x B$, take $R^{-1}(\top)$.
  Similarly, a unary predicate on $A$ is a function $A\to\2,$
  a binary predicate on $A\x B$ is a function $A\x B\to \2$, and so on for more attributes.

  For $u,v\in\2$ define $u+v = u\lor v$ and $u\cdot v = u\land v$.
  Using these operations and the functional point of view on relations,
  we can write an arbitrary relational algebra expression in a form similar to the matrix product above.

  For example, suppose we have relations $R : A\x B\to\2$ and $S : B\x C\to \2$ and predicates $P : C\to \2$ and $Q : A\x C\to\2.$
  The expression $e = \pi_{A}(\sigma_Q(\sigma_P(R\bowtie S)))$ is equivalent to $$e(a) = \sum_{b\in B}\sum_{c\in C}(R(a,b)\cdot S(b,c)\cdot P(c)\cdot Q(a,c)).$$
  That is, $e(a) = \top$ exactly when there exists a tuple $t \in R\bowtie S$ that satisfies $P$ and $Q$
  and has $\pi_A(t) = a$.

  This example highlights that the binary join $R\bowtie S : A\x B \x C \to \2$ is essentially an instance of matrix multiplication:
  $\pi_{AC}(R\bowtie S)(a,c) = \bigvee_b R(a,b)\land S(b,c) = \sum_b R(a,b)\cdot S(b,c).$
  \qed
\end{example}

\begin{example}[Path Operations]
  \label{eg:3}
  Suppose we have a finite graph $(V,E)$.
  An edge weighting can be represented as a function $w : V_1\x V_2\to \mathbb{R}\cup \{\infty\}$, where $V_1=V_2=V$.
  We assume $w(u,v)=\infty$ if there is no edge $(u,v)$ in the graph.

  Many iterative path-finding algorithms compute a frontier of shortest paths:
  given a source vertex $v_0$ and best-so-far shortest path length for each vertex
  organized as a vector $d : V_1\to R$, we compute a new distance vector
  $$d'(v_2) = \min_{v_1\in V_1} \left[d(v_1) + w(v_1, v_2)\right].$$
  This is the shortest way of reaching $v_2$ by extending an existing path with one edge.

  The $(\min, +)$ semiring is defined on the set $\mathbb{R}\cup\{\infty\}$:
  In this structure, \emph{addition} is given by $\min$ and \emph{multiplication} is given by addition (of extended real numbers).
  In this notation, the shortest path update expression is
  $$d'(v_2) = \sum_{v_1\in V_1} d(v_1)\cdot w(v_1, v_2).$$
  Moreover, we can obtain the shortest path itself using \emph{the same expression} by selecting a related semiring \cite{dolan2013fun}.

  \qed
\end{example}

\section{The Variable Contraction Problem}
\label{sec:language}

This section introduces a particular formulation of the problem of computing an aggregate over a
product of a collection of functions.
This problem was called the MPF (marginalize a product function) problem in \cite{aji2000generalized} and
FAQ-SS (Functional Aggregate Query, Single Semiring) in \cite{faq}.
It is an intuitively simple operation, but expressive enough to include many algorithms by varying the underlying \emph{semiring} $R$
and the functions involved.

\begin{definition}[Semiring]
A semiring is a set equipped with structures $(+,0)$ and $(\cdot, 1)$
satisfying the axioms of a commutative monoid and a monoid, respectively, and also the distributive and absorption laws:
  \begin{align*}
  x(y+z) &= xy+xz  \\
  (y+z)x &= yx+zx  \\
    0 &= 0\cdot x = x\cdot 0.
  \end{align*}
\end{definition}

\begin{example}[Semirings]
We give a few well-known examples of semirings:
\begin{itemize}
\item any ring or field with $(+,\cdot)$ is a semiring;
\item the set of booleans $\{\bot, \top\}$ under $(\lor,\land)$, and more generally any boolean algebra;
\item $\mathbb{R}\cup\{\infty\}$ with $(\min, +)$ (the \emph{tropical semiring}\cite{pin_taylor_atiyah_1998});
\item the set of square matrices with entries in a semiring under addition and matrix multiplication;
\item and finally, when $R$ is a semiring and $I$ is any set, the functions $I\to R$ form a semiring using pointwise addition and multiplication.
\end{itemize}
\end{example}

Most of the constructions in this work are parametrized by a semiring $R$,
an integer $m,$ and a sequence of (not necessarily distinct) finite, totally ordered sets $[I_1, I_2,\ldots, I_m]$.
We often identify the index $i$ and its corresponding set $I_i$.

Define $$\numbers{m} = \{1,2,\ldots,m\}.$$

\begin{definition}[Variables]
For any subset $S \subseteq \numbers{m}$,
$I_S$ denotes $\prod_{i\in S}I_i$, called an \emph{indexing set}.
We call a function $V : I_S \to R$ a \emph{variable} (a generalized quantity in $R$ that varies over its indexing set $I_S$).
The subset $S$ is the variable's \emph{shape}.
Let $\pi_{S} : I_{\numbers{m}} \to I_S$ be the projection function.
When $x\in I_{\numbers{m}} $, $V(x)$ means $V(\pi_{S}(x))$.

\end{definition}
\begin{definition}[Variable Contraction]
  \label{def:contraction}
  An instance of the \emph{variable contraction problem} is defined by
  a set of shapes
  $$\{S_k\subseteq\numbers{m}\mid 1\le k\le n\},$$
  a set of variables
  $$\{V_k : I_{S_k} \to R \mid 1\le k\le n\},$$
  and a subset of indices $C \subseteq \numbers{m}$.

  The indices in $C$ are said to be \emph{contracted} or \emph{marginalized}.
  The indices in the complement $F = \numbers{m}\setminus C$ are called \emph{free}.
  Note that for any tuples $x_F\in I_F$ and $x_C\in I_C$ we have a corresponding tuple $x_F\cup x_C\in I_{F\cup C} = I_{\numbers{m}}$.

  The problem is to compute a table of values for the function $V : I_F\to R$
  defined over the free indices by the equation

  \begin{align}
    V(x_F) = \sum_{x_C \in I_C} \prod_{1\le k\le n} V_k(x_F\cup x_C).
  \end{align}
\end{definition}

\begin{example}[\cref{eg:1}, part 2]
  Each of the examples given in the previous section is a variable contraction problem.
  For example, matrix multiplication is specified by $S_1 = \{1,2\}, S_2 = \{2,3\}, V_1 = A, V_2 = B,$ and $C = \{2\}$.
\end{example}

\note
Computations that make use of a bilinear form to combine two or more (representations of) \emph{tensors}
are often referred to as \emph{contractions}.
This terminology comes originally from differential geometry~\cite{ricci1900methodes}.
Since variable contraction can express traditional contraction
(once a particular basis is chosen), but we do not require that variables actually represent true tensors,
we consider this operation to be a sort of generalized contraction.

\section{Challenges of Efficient Execution}
\label{sec:challenges}

\begin{figure}[b]
\includegraphics[scale=0.4]{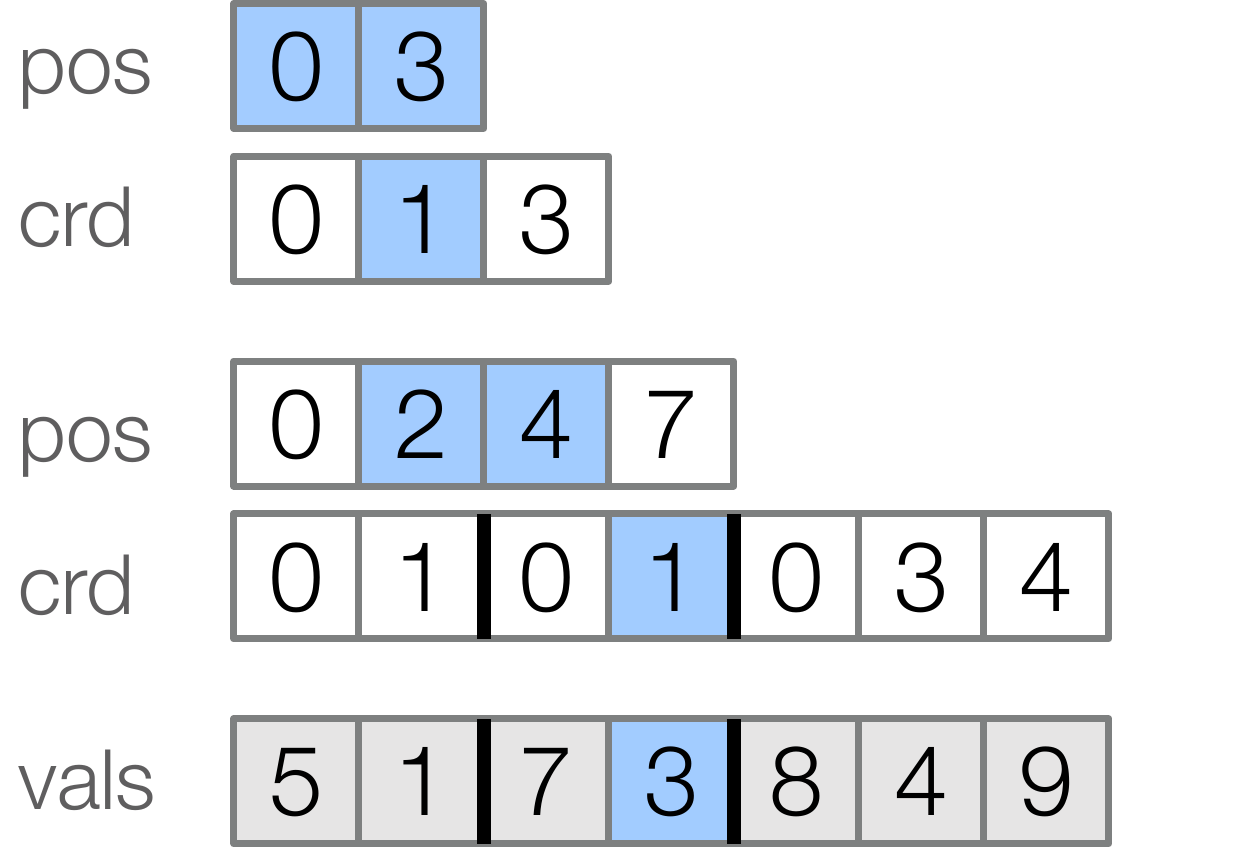}
\caption{The DCSR (doubly compressed sparse row) matrix format storing a matrix with three non-empty rows (0,1,3)
  and four non-empty columns (0,1,3,4).
  Each \ttt{crd} array stores a series of what we call index values.
  The entries of the matrix are stored in the flattened \ttt{vals} array.
  The highlighted trail depicts the entry $(1,1)\mapsto 3$.
\label{fig:dcsr}
}
\end{figure}

In this section, we discuss two concepts which significantly affect the performance of many variable contraction queries in reality:
fusion and hieararchical iteration.
Our semantics for algorithmic solutions to the contraction problem is motivated by these two core issues.
As a case study for the present work and for the sake of concrete examples, we focus on the domain of array programs.

By viewing an array as a function from its indexing set to the corresponding numeric entry,
the variable contraction problem encompasses the operation commonly known as ``tensor contraction'' or ``einsum''~\cite{harris2020array} in work on machine learning and numerical programming.
When inputs are represented using \emph{sparse data formats}, values that are known to be zero are not explicitly
represented in memory.
For example, the coordinate list representation stores a list of nonzero values and their associated coordinates, sorted by coordinate.
Although this data may be stored using an array, its semantic content is a vector in some higher dimensional space.
An even more highly compressed format is shown in \cref{fig:dcsr}; this data structure stores
each non-empty row coordinate once, and values are stored in a single cache-friendly array.

Array programs are especially sensitive to issues of memory hieararchy and locality of reference.
It is necessary in practice to minimize intermediate allocations and iterate over contiguous segments of data as much as possible
since sparse structures do not support constant-time random access.
We capture this issue using the notion of \emph{fusion}.
\begin{definition}
  \label{def:fusion}
A computation implementing variable contraction is \emph{fully fused} when the following conditions are met:
\begin{itemize}
\item
the number of memory locations modified during the span of time spent computing the $i^{th}$ output value is bounded by a constant depending only on the variable contraction expression, not the input data, and
\item
each data structure representing a variable with domain $I_S$ is only iterated by the lexicographic order on $I_S$, possibly more than once.
\end{itemize}
\end{definition}

\begin{example}
If the element-wise product of three vectors $V_1\cdot V_2\cdot V_3$
is computed by first storing the entries of $V' = V_1\cdot V_2$ into memory and then
computing all entries of $V'\cdot V_3$, the computation is \emph{not fused}:
to compute any entry of the output, we must first compute all of $V'$;
however, the number of memory locations modified while computing $V'$ depends on the length of $V_1$ or $V_2.$
In contrast, computing the values $V'(x)$ sequentially by $V_1(x)\cdot V_2(x)\cdot V_3(x)$ is fused.
\end{example}

Additionally, matrices and higher-dimensional arrays are often used to represent mathematical tensors for the sake of computation.
A matrix $A : I_1\x I_2\to R$ can be thought of as a two-level object:
each index $x_1\in I_1$ gives rise to a \emph{row} $A(x_1) : I_2\to R$.
When these objects are represented in a sparse format to be iterated, it is especially important to exploit this hierarchy
while performing multiplications.
Because multiplication satisfies $0\cdot x =0$, it is permissible to skip over any segment of data that is not present in
\emph{all} the input factors.
Skipping over an index at the highest level saves the work of operating on an entire slice of a tensor.
Notice that this very same phenomenon arises in database traversals which use table-indexes to skip over absent tuples
(tuples which map to $0 =\bot$, according to the interpretation given in \cref{eg:2}).

Because memory locality and demand-driven, hierarchical computation arise in so many computational contexts,
we encode them directly into our semantic domain.
We will address these issues starting in \cref{sec:streams} by defining nested streams
and operators which multiply and aggregate streams directly.
These operators guarantee fusion and enable various approaches to hieararchical iteration with skipping.

\section{A Functional Interpretation}
\label{sec:semantics}

To motivate our operators on streams, we first introduce them for variables.
This implementation will later serve as our specification of intended stream behavior.

We break the variable contraction problem down into three operators which act on singular functions or pairs of functions.
We can obtain the result for any instance of the contraction problem by applying these operators to a given set of variables.

\subsection{The Contraction Operators}
  \notation
When $S\subseteq T\subseteq\numbers{m}$, there is a projection operator $\pi_S : I_T \to I_S$ which projects a tuple onto
the smaller indexing set.
In particular, for $i\in S$, we introduce the notation $\pi_{i} = \pi_{S\setminus\{i\}}.$

\begin{description}
\item[Replication]
Whenever $S\subseteq T$, it is possible to redefine a variable of shape $S$ to have shape $T$.

Explicitly, the \emph{replication} operator, $\repop_{i},$ is defined for $i\notin S$.
This transforms a function $V : I_{S}\to R$ to type $I_{S\cup\{i\}}\to R$ by the rule
$$(\rep{i}{V})(x) = V(\pi_{i}(x)).$$

By applying $\repop_{i}$ to $V$ for each $i\in T\setminus S,$ we obtain a function of type $I_T\to R$.
Since the order of projections is irrelevant the final function only depends on $T$ and $V$.

\item[Multiplication]
Since $R$ is a semiring, functions defined on a common domain can be multiplied pointwise:
given $V_1, V_2 : I_S \to R$, define $(V_1\cdot V_2)(x) = V_1(x)\cdot V_2(x).$

\item[Summation]
  Variables can also be \emph{added} pointwise in the same way.
When $V : I_S\to R$, $i\in S$, and $x_i\in I_i$, $V(x_i)$ denotes the partial application of $V$, which
has type $I_{S\setminus\{i\}}\to R$.

Given $V : I_{S} \to R$ and $i \in S$, define $\sig{i}{V} : I_{S\setminus\{i\}}\to R$ by
$$(\sig{i}{V}) = \sum_{x_i\in I_i}V(x_i).$$

\end{description}

\subsection{Equivalence}
Given these three operators, it is straightforward to construct the result of a contraction problem:
\begin{enumerate}
\item For each variable, apply replications (in any order) to obtain a new variable of type $I\to R$.
\item Multiply the variables pairwise in their given order.
\item Apply summation (in any order) for each index in the set $C$.
\end{enumerate}

\begin{example}[\cref{eg:1}, part 3]
  Given variables $A : I_1\x I_2 \to R$ and $B : I_2\x I_3 \to R$, their matrix product is
  $$ \sig{2}{((\rep{3}{A})\cdot(\rep{1}{B}))}.$$
\end{example}

\remark{}
Note that the product operation is associative (and commutative, if $R$ is commutative),
and that it distributes over summation since $R$ is a semiring:
$f\cdot\sig{i}{g} = \sig{i}{(f\cdot g)}.$

\section{A Streaming Interpretation}
\label{sec:streams}
In this section, we define \emph{indexed streams},
which model the computation of a sparse array one element at a time.
In \cref{sec:combinators} we will redefine the three contraction operators on the domain of streams
and show that they construct efficient, fused computations.

We also define a natural semantics function which maps back into the domain of functions.
In \cref{sec:proofs} we show that this map is a \emph{homomorphism};
hence streams correctly model variables and any algebraic optimizations performed on them are sound.

\subsection{Indexed Streams}
\label{sec:stream_def}
\begin{definition}[Indexed Streams, $\streams{I}{V}$]
Given sets $I$ and $V$, 
an \emph{indexed stream of type $I\to V$} is a tuple
$$(S, q, \i, \v, \ready, \delta)$$
with $S$ the \emph{state space} and $q\in S$ its \emph{state}.
The remaining elements are functions with the following names and types:

\begin{align*}
 \i     : S \to I,  & \textrm{ the \emph{index function}} \\
 \v     : S \to V,  & \textrm{ the \emph{value function}} \\
 \ready : S \to \2, & \textrm{ the \emph{ready function}} \\
 \delta : S \to S, & \textrm{ the \emph{successor function}} \\
\end{align*}

The set of streams of type $I\to V$ is written $\streams{I}{V}$.

When notationally desirable, the four functions can be packaged into one:
$$ f : \streamsig .$$

When clear from context, a stream $(S, q, f)$ will be simply referred to by its state $q$.
In particular, if $q = (S, q, f)$ is a stream, then $\delta(q)$ is the stream $(S, \delta(q), f).$
Lower-case variables $q,r,s,\ldots$ and $a,b,\ldots$ are used to denote streams (or stream states).

\end{definition}

A stream should be viewed as a transition system: $\delta(q)$ is the state that {follows}
$q$, and $(\i(q),\v(q))$ are to be thought of as data output by the stream in state $q$.
As long as a stream has a finite set of reachable states, it is possible to evaluate it to obtain a variable.
First we formalize the notion of finiteness, then describe the evaluation procedure.

\begin{definition}[Finite Streams]
A state $r$ is \emph{reachable from $q$}, written $\reaches{q}{r}$, if $r = \delta^k(q), k\ge 0.$
A state $r$ such that $r = \delta(r)$ is called \emph{terminal}.
If the set of states reachable from $q$ contains a terminal state then it is necessarily finite,
and we say that the stream $q$ is {finite}.

\end{definition}

Given a function $f : V\to R$, there is a natural way of interpreting a stream of type $I\to V$ as a variable of type $I\to R$.
First note that
the space $I\to R$ is spanned by the following simple functions:

\begin{definition}
For $x\in I$ and $v\in R$, let
$x\mapsto v$
denote the function
$$ (x\mapsto v)(y) =
\begin{cases*}
    v & if $x = y$ \\
    0 & otherwise.
\end{cases*}
$$
\end{definition}

We can evaluate a finite stream $q\in\streams{I}{V}$ by summing a series of terms computed over its set of reachable states.
For a given state $r$, if the state is ready ($\ready(r)=\top$), the term is $\i(r)\mapsto f(\v(r))$;
otherwise, if $\ready(r)=\bot$, the term is zero.
The following definition formalizes this.

\begin{definition}[Stream Evaluation: $\sem{-}_f$]
  \label{def:eval-v1}
  Suppose $q$ is a finite stream of type $I\to V$ and $f : V\to R$ is a function which maps
  stream values into a semiring.
  Let $$\semm{q}_f =
  \begin{cases*}
    0 & $\ready q = \bot$ \\
    \i(q)\mapsto f(\v(q)) & otherwise.
  \end{cases*}$$
  Then $\sem{q}_f : I\to R$ is the function
  \begin{align*}
    \sem{q}_f = \sum_{\reaches{q}{r}} \semm{r}_f
  \end{align*}

  We will usually omit the function $f$ when it is arbitrary or understood from context.

Stream evaluation satisfies the following \textbf{important identity:}
If $\delta(q) = q$, then $\sem{q} = \semm{q}$;
otherwise, $$\sem{q} = \semm{q} + \sem{\delta(q)}.$$

\end{definition}

\begin{example}
  \label{eg:sparse-vec}
  Suppose we have a variable $V : I_1 \to R$ over a single index and that $V(x)$ is nonzero for exactly $k$ values of $x$.
  This means that there is a sequence $[(x_1, v_1), (x_2, v_2),\ldots, (x_k, v_k)]$ such that $V(x_\ell) = v_\ell$ and $V(x)=0$ for all $x$ not appearing in the sequence.
  With this data, we can construct a stream $q\in\streams{I_1}{R}$ such that $\sem{q}_{id} = V$:
  $$ (S = \numbers{k}, q = 1, \i(r) = x_{r},\v(r) = v_{r},\ready(r) = \top, \delta(r) = r+1).$$
  Note that for this definition and elsewhere, we use saturating addition on $\numbers{k}$ so that $k+1 = k$ is a terminal state.
\end{example}

\remark
This definition captures the key intuition for stream evaluation.
However, to fully solve the contraction problem we will make use of \emph{nested streams}.
For example, to model a two-dimensional variable defined on $I_1\x I_2$,
we use a stream from the set $\streams{I_1}{\streams{I_2}{R}}.$
By selecting an appropriate function $f,$ it is possible to reuse the definition of $\sem{-}_f$ as stated for nested streams.
We will describe this construction in \cref{sec:nested_streams}.

\subsection{Stream Combinators}
\label{sec:combinators}
Here we define the operators on streams that suffice to solve variable contraction problems.
The correctness of these operators is shown in \cref{sec:proofs}.

\subsubsection{Replication}
Replication is used to make a single value available at multiple states across time.
Computationally, it does not necessarily require copying or recomputing the value;
it simply makes a single object available at multiple times.

Replication is implemented via a constant stream:
\begin{definition}
  Given a bijective, order-preserving function $\i : \numbers{k}\to I$ and a value $v\in V$,
  the constant stream $\repop(v) \in\streams{I}{V}$ is given by
  $$ (S = \numbers{k}, q = 1, \i, \v(r) = v, \ready(r) = \top, \delta(r) = r+1). $$
  That is, the stream is always ready, always returns the value $v$, and iterates across the indices of $I$.
\end{definition}
\subsubsection{Multiplication}
\label{sec:multiplication}
Multiplication is used to combine the values of two streams.
Since we assume that $0\cdot x = x\cdot 0 = 0$, our operator performs the essential \emph{intersection optimization}:
it does not produce output at a given state unless both input streams are non-zero there.
This is implemented using a successor function that is similar to the familiar merge step of merge-sort.

\notation
Whenever $q$ is a stream, $S_q$ denotes its state space.
We define a pre-ordering on arbitrary streams $a,b\in\streams{I}{V}$ as follows:
\begin{align*}
a \le b := \i(a) < \i(b) \lor (\i(a) &= \i(b) \land \ready(a) = \bot) \\
a < b := \i(a) < \i(b) \lor (\i(a) &= \i(b) \land \ready(a) = \bot \land \ready(b)=\top) \\
\end{align*}

\begin{definition}
\label{def:multiplication}
Given streams $a, b : I \to V$ and a product operation $(\cdot) : V\x V \to V$,
the \emph{product stream} $a\cdot b$ over the statespace
$S_a\x S_b$ is given by
\begin{align*}
  \i(a, b) &= \max(\i(a), \i(b)) \\
  \v(a, b) &= \v(a)\cdot \v(b) \\
  \ready(a,b) &= \ready(a)\land\ready(b)\land \i(a) = \i(b) \\
  \delta(a,b) &=
                \begin{cases*}
                  (\delta(a), b) & if $a\le b$ \\
                  (a, \delta(b)) & otherwise.
                \end{cases*}
\end{align*}
\end{definition}

\subsubsection{Summation}
Summation aggregates the values of a stream across its set of reachable states.
First we define an addition operator for pairs of streams that is similar to the product stream construction;
unlike product, it iterates over the union of the non-zero values of either stream.
\begin{definition}
Given streams $a, b : I \to V$ and a sum operation $(+) : V\x V \to V$,
the \emph{sum stream} $a+ b$ over the statespace
$S_a\x S_b$ is given by
\begin{align*}
  \i(a, b) &= \min(\i(a), \i(b)) \\
  \v(a, b) &= (\i(a) = \i(a,b))\cdot\v(a) \\ &+ (\i(b) = \i(a,b))\cdot\v(b) \\
  \ready(a,b) &= \ready(a)\lor\ready(b) \\
  \delta(a,b) &=
                \begin{cases*}
                  (\delta(a), b) & if $a < b$ \\
                  (a, \delta(b)) & if $b < a$ \\
                  (\delta(a), \delta(b)) & otherwise.
                \end{cases*}
\end{align*}
\end{definition}

Our summation operator for streams is simple:
\begin{definition}[$\sigop$]
\label{def:summation1}
Suppose $V$ is a set with addition defined.
Given $q \in \streams{I}{V}$, define $\sigop q\in V:$
$$\sigop q = \sum_{\reaches{q}{r}}\v(r).$$
\end{definition}

So, a stream $q\in\streams{I}{R}$ is mapped to an element of $R$ and
a stream in $q\in\streams{I}{\streams{J}{V}}$ is mapped to $\streams{J}{V}$ via stream addition.

\section{Nested Stream Evaluation}
\label{sec:nested_streams}
In this section and the following one, we will build intuition for streams and the stream combinators by discussing their runtime behavior and several optimizations.
In order to discuss performance, we will first need to define the more general evaluation map for nested streams.

\label{sec:stream_evaluation}
To manipulate nested streams, we need to be able to formally manipulate their sequence of indices.
To facilitate this we introduce new notation for $\S$.

Suppose $\alpha$ is a subsequence of $\mseq{m}$.
We will use the standard list notations:
$[]$ denotes the empty list and $i :: \beta$ denotes the list with head $i$ and tail $\beta \subseteq \mseq{m}.$
Since the lists we consider are subsequences of $\mseq{m}$, for all $i'\in\beta,$ $i' > i$.

In analogy with variables, the simplest definition of the set of streams
that model variables of type $I_S\to R$ would be $\streams{I_S}{R}$.
However, hierarchical iteration is more natural using the following definition.

\begin{definition}[Nested Streams, $\S_\alpha$]
  \label{def:nested_streams}
Define
$$\S_{[]} = R, \; \text{and}$$
$$\S_{i::\alpha} = \streams{I_i}{\S_{\alpha}}.$$
\end{definition}

A nested stream produces a new stream at every state.
Thus, the obvious way to define its evaluation would be to recursively evaluate the yielded stream before proceeding to the next state in the sum.
This is straightforward to formalize using the earlier \cref{def:eval-v1} by specializing the $f$ function:

\begin{definition}[Stream Evaluation: $\semn{-}_\alpha$]
  \label{def:stream_eval_final}
Suppose $q\in \S_\alpha$.

For $\alpha = []$ (that is, $q\in R$), define $$\semn{q}_\alpha = q.$$ 

Recursively, for $\alpha = i :: \beta$, $$\semn{q}_\alpha = \sem{q}_{\semn{-}_\beta}.$$

This map sends a stream $q\in\S_\alpha$ to the variable $\semn{q}_\alpha : I_\alpha \to R$.
Notices that, in the recursive case, the map that is applied to stream values is itself $\semn{-}_\beta$.
The previous definition for $\sem{-}_f,$ which requires a map from stream values to a semiring $R$, is valid here
because functions $I_\beta\to R$ \emph{themselves} form a semiring whenever $R$ does.

From now on, since we exclusively work with nested streams, we use $\semn{-}$ to refer to $\semn{-}_\alpha$.

\end{definition}

Finally, we extend the replication and summation operators of the previous section to act on nested streams in a natural way.
Replication or summation can be applied to an arbitrary index $i$ of a function $I_S\to R$, and we can make the same generalization for streams.

\begin{definition}[Stream map]
For any function $f : A\to B$ there is a function
$$\map_i f : \streams{I_i}{A}\to\streams{I_i}{B}$$
  given by
$$(S_q, q, \i, \v, \ready, \delta)\mapsto (S_q, q, \i, f\circ \v, \ready, \delta).$$
This simply applies the function to each value produced by the stream.

We can also iterate this to map over an entire prefix of indices:
  \begin{align*}
  \map_{[]} &= \id \\
  \map_{i::\alpha} &= \map_i\circ\map_\alpha
  \end{align*}
\end{definition}

\begin{definition}[Nested Stream Operators]
  \label{def:nested_operators}
Suppose $\gamma = \alpha+[i]+\beta$ is an ordered sequence of indices.
The nested summation and replication operators are
$$\sigop_{i} : \S_{\gamma} \to \S_{\alpha+\beta}= \map_{\alpha}(\sigop)$$
and
$$\repop_{i} : \S_{\alpha+\beta}\to\S_{\gamma} = \map_{\alpha}(\repop).$$

The multiplication operator is already valid on nested streams:
it is defined whenever the value type can be multiplied, and
since it defines a multiplication for streams, this carries on inductively to streams in $\S_\alpha$ for all $\alpha$.
\end{definition}

With these definitions in hand, our stream operators are just as expressive as the variable operators.
Eventually we will show that the stream operators are also sound: $\semn{-}$ commutes with each operator.
In the following section, however, we will first discuss their usefulness in modeling optimizations.

\section{Performance Analysis}
\label{sec:performance-analysis}

We now have the tools to analyze the performance of several interesting streams.
Using the example of matrix multiplication, we illustrate that asymptotic runtime depends on the index ordering,
which has previously been demonstrated for
tensor contractions~\cite{kjolstad2020sparse,ahrens2022}.
Next, we give a small extension to the stream model to allow for logarithmic-time \emph{index skipping} that is pervasive in efficient data-processing algorithms.
We show that stream-based evaluation of relational queries 
can implement the worst-case optimal multiway join~\cite{ngo2018worst}.
These examples illustrate the range of behaviors that arise from streams built out of our combinators.
They also highlight the notational efficiency that the contraction language delivers.

First we formalize what we mean by runtime:

\begin{definition}[Stream Size: $\size$ and $\size^0$]
 Define the size of a value $v\in R$ to be one.
 Otherwise,
 $$\size(q) = \sum_{\reaches{q}{r}}\size(r).$$
 Also, let $\size^0(q)$ denote the number of reachable states: $|\{r \mid \reaches{q}{r}\}|.$
\end{definition}

For a stream $q$, $\size(q)$ is exactly the number of invocations of $\delta$ one would need to reach all the non-zero terms of $\semn{q}$.
Since streams are useful to model computations that perform a constant amount of work at each state
and $\semn{-}$ models the evaluation of a stream,
the size is a concrete and precise measure of practical runtime performance.
We note a few properties of size:

  \begin{itemize}
  \item A sparse matrix or table with $k$ non-zero entries can be represented by a stream with size exactly $k$ (in many possible ways).
  \item If $a,b \in \S$, then $\size^0(a\cdot b) \le \size^0(a) + \size^0(b)$.
  \item 
    For $q\in \S_{i::\alpha}$,
    \begin{align}
      \label{eq:size_bound}
      \size(q) \le \size^0(q)\max_{\reaches{q}{r}}(\size{r}).
    \end{align}
  \end{itemize}

The stream combinator definitions are parametrized by an index ordering.
The next example illustrates that, when we have flexibility to choose this ordering, some choices may achieve asymptotically better performance.

\begin{example}[\cref{eg:1}, part 4]
  Two index ordering strategies are the \emph{inner-product} method and the \emph{linear combination of rows}.
  For the inner product, we have two matrices $A : I_1 \x I_3$ and $B : I_2 \x I_3$.
  The contracted index is the innermost (last) one, and the contraction expression is
  \begin{align}
    \label{eq:inner_product}
    e_1 = \sig{3}{~(\rep{2}{A})\cdot(\rep{1}{B})}.
  \end{align}
  The reason for the algorithm's name is that iteration proceeds across $I_1\x I_2$, the output shape,
  and computes an inner-product of the corresponding row and column of $A$ and $B$ across $I_3$.

  For linear combination of rows, on the other hand, we compute on the two matrices $A : I_1 \x I_2$ and $B : I_2 \x I_3$.
  The expression (given in the earlier example) is
  \begin{align}
    \label{eq:linear_combination}
    e_2 = \sig{2}{~(\rep{3}{A})\cdot(\rep{1}{B})}.
  \end{align}
  This algorithm is called linear combination of rows because each row $A(i_1)$ is used to select a subset of the rows of $B$;
  the resulting row of output is a linear combination of these rows, weighted by corresponding non-zero entries of $A(i_1)$.

  Suppose that we have \emph{streams} $A : \S_{[1,3]}$ and $B : \S_{[2,3]}$ representing sparse matrices.
  Since the inner-product expression replicates the streams respectively across index 2 and 1,
  the entire cartesian product of non-empty rows of $A$ and $B$ is visited.

  \cref{eq:linear_combination}, on the other hand, is applied to streams $A : \S_{[1,2]}$ and $B : \S_{[2,3]}$.
  Replication is performed over index 1 and index \emph{3}.
  Multiplication intersects along $I_2$, which indexes the columns of $A$ and rows of $B$.
  This intersection is likely much smaller than $I_2$ in the inner-product case.

  In the simple case where a sparse matrix has $O(n)$ non-empty rows and $O(k)$ non-empty values within a row,
  we can use \cref{eq:size_bound} to obtain the conservative bounds $$\size(e_1) \in O(n^2k) \;\;\, \text{and} \;\, \size(e_2) \in O(nk^2).$$
  \qed
\end{example}

Finally, we address a defect in the bound on multiplication.
Earlier we noted the following weak bound:
for $a,b \in \S$, $\size^0(a\cdot b) \le \size^0(a) + \size^0(b)$.
The multiplication stream traverses one state at a time; in a \emph{skewed} data instance,
where the stream $a$ is much smaller than $b$, the multiplied stream may still need to spend much time traversing unnecessary elements of $b$.
However, data is often organized in some lexicographically sorted data structure,
so there are straightforward mechanisms for quickly skipping to a given index value.
B-tree indices, dense array lookup and galloping binary search are all such methods.

\begin{definition}[Searchable Streams]
  Let $q = (S, q, f)\in \streams{I}{V}$ be a stream and define $S\x_{\le}I = \{(s, x)\in S\x I\mid \i(s)\le x\}.$
  We say that $q$ is \emph{searchable} if there is a function $\skipfn : S\x_{\le}I \to S$ with the following properties:
  $$\i(\skipfn(q,x)) \ge x,$$
  $$\skipfn(q,x) = \delta^k(q) \textrm{ for some } k\ge 0,\  \text{and}$$
  $$\forall j, 0\le j < k, \i(\delta^j(q)) < x.$$
\end{definition}

\begin{example}
  When applied to searchable streams, the stream combinators all support efficient skip functions.
  For multiplication, take
  $$\skp(a\cdot b, x) = \textrm{let } a' := \skp(a,x)\textrm{ in } a' \cdot \skp(b,\i(a')).$$

  For replication over $I$, states are related to the indexing set by a bijection $\i : \numbers{k}\to I$,
  hence
  $$\skp(a, x) = \i^{-1}(x).$$
  Note that, in practice, this allows replicated streams to advance to a given index in constant time.

  For summation, the skip function follows from a skip function for ordinary binary summation,
  which is similar to  the multiplication case:
  $$\skp(a+b, x) = \skp(a,x) + \skp(b,x).$$

  Primitive dense streams (those that are backed by dense arrays) or implicit streams
  (those that are backed by a constant-time computable function) allow for constant time skipping,
  as in the example of replication.
  More generally, primitive streams used in practice allow for skip implementations that run in time bounded by the logarithm of the data size.
  \qed
\end{example}

The primary reason to discuss skip functions is for the sake of optimizing the stream product:
$$
  \delta(a\cdot b)_s =
                \begin{cases*}
                  \skp(\delta(a), \i(b))\cdot b & if $a\le b$ \\
                  a \cdot \skp(\delta(b), \i(a)) & otherwise.
                \end{cases*}
$$

Supposing $\size^0(a)< \size^0(b)$, using this function may asymptotically reduce the work done in multiplicative stream evaluation from
$\sizez(a)+\sizez(b)$ to $O(\sizez(a)\cdot\log\sizez(b)).$
For $m\ge 1$ and streams $\{q_i\}_{i\in\numbers{m}}$, the size of the product is bounded by
$$\sizez\left(\prod_iq_i\right)\in\tilde{O}\left(\min_{i\in\numbers{m}}\sizez(q_i)\right).$$
This building block is sufficient to implement an instance of
Generic-Join, Algorithm 3 from~\cite{ngo2014skew}.
The fact that nested stream evaluation is an instance of generic join is easy to check:
\begin{itemize}
  \item The global index ordering determines the choice of $I$.
  \item The nested loops of stream evaluation implement the recursive calls to \ttt{Generic-Join}.
  \item Relations that are broadcast over a given dimension have a constant-time transition, and
  \item for the set of other relations that are defined over a given index, $\delta_s$ implements an adequately efficient $m$-way sorted merge
    (assuming input relations are stored as tries).
\end{itemize}

The proofs of correctness we develop later generalize in a simple way to searchable streams.
However, for clarity, we do not handle searchable streams explicitly.

\section{Implementation}
\label{sec:compilation}

We implement the contraction operators as an embedded domain-specific language (DSL) in Lean~\cite{moura2015lean}
and a compiler, which we call \ttt{Etch}\footnote{\url{https://github.com/kovach/etch/}}.
The figures in this section depict unmodified Lean source code implementing some of the features of our implementation.
We demonstrate that the stream combinator concepts are adequate to construct
a real, working compiler that matches performance of hand-written code for sparse matrix computations.
We establish this by comparing the generated code to that generated by TACO \cite{kjolstad2017tensor}.

By implementing in Lean, our compiler is amenable to mechanized verification, but we leave this mechanization as future work.
Our prototype implementation demonstrates that
\begin{itemize}
\item
variable contraction problems can be specified in a
high-level, richly typed input language that helps prevent programmer error, and
\item high-performance programs can be generated from this notation using a concise, modular compilation approach.
\end{itemize}

We describe how to encode a stream as an imperative program,
how the compiler is structured out of loosely coupled components,
the input notation,
and our evaluation of generated code.

\subsection{Translating Streams to Efficient Imperative Code}

\begin{figure}[t]
\begin{minipage}[b]{0.31\textwidth}
\begin{lstlisting}
structure G (ι α : Type) :=
  (index : ι)
  (value : α)
  (ready : E)
  (valid : E)
  (init : Prog)
  (next : Prog)
\end{lstlisting}
    \caption{
      The fields \texttt{index}, \ttt{value}, \ttt{ready}, and \texttt{next} correspond to syntactic representations of the corresponding stream functions.
      The extra fields \ttt{init} and \ttt{valid} are added because states are represented implicitly as program state.
    \label{fig:lean-gen}}
\end{minipage}
\qquad
\begin{minipage}[b]{0.60\textwidth}
\begin{lstlisting}
instance base.eval : Ev ((E → Prog)) E :=
{ eval := λ acc v, acc v }

instance unit.eval [Ev α β] : Ev α (G unit β) :=
{ eval := λ acc v,
    v.init; Prog.while v.valid
      (Prog.if1 v.ready (Ev.eval acc v.value) ; v.next) }

instance level.eval  [Ev α β] : Ev (E → α) (G E β) :=
{ eval := λ acc v,
    v.init; Prog.while v.valid
      (Prog.if1 v.ready (Ev.eval (acc v.index) v.value);
      v.next) }
\end{lstlisting}
    \caption{
      Our nested stream evaluation function $\semnop$ is implemented using three cases that correspond to
      the base case $\S_{[]}$, the case of a contracted index, and the inductive case $\S_{i::\alpha}$.
      These functions recursively generate loop nests derived from the input stream.
    \label{fig:lean-loop}
    }
\end{minipage}
\end{figure}

Our objective is to translate an expression of the variable contraction language into imperative code.
Furthermore, the space and time usage for the generated code must match the performance analysis for evaluation given in \Cref{sec:performance-analysis}.
Recall that a stream is characterized by a transition function $f : \streamsig.$
Inspired by ~\cite{kiselyov2017stream}, we identify the statespace $S$ with the state of the imperative program.
Since states are no longer first-class objects, the $\delta$ function
becomes a procedure run for the sake of its side effects.

Output programs are represented in a simple Imp-style \cite{pierce2010software} imperative language fragment
(Lean type \ttt{Prog}) that can be directly transliterated to C and compiled.
We assume that index types $I_1,\ldots, I_m$, the semiring type $R,$ and boolean type $\2$ can be represented at run time.
They are represented syntactically with an expression type \ttt{E}; at run-time,
the expression evaluates to a value that varies based on present state.
With these substitutions, the resulting type for streams is given in \cref{fig:lean-gen},
which depicts our \emph{stream intermediate representation} (stream IR) called \ttG{}.

Since states are no longer first class objects, the new \ttt{init} fragment loads the initial state of a stream into memory.
Since we cannot directly compare states to detect a terminal state, we introduce the \ttt{valid} expression,
which is true exactly when a stream is in a non-terminal state.
These are the only extra components necessary for this imperative translation of the stream concept.

\subsection{Compiler Organization}

\begin{figure}[b]
\begin{minipage}{0.35\textwidth}
  {\footnotesize
\begin{align*}
  \i(a, b) &= \max(\i(a), \i(b)) \\
  \v(a, b) &= \v(a)\cdot \v(b) \\
  \ready(a,b) &= \ready(a)\land\ready(b) \\
              & \land \i(a) = \i(b) \\
  \delta(a,b) &=
                \begin{cases*}
                  (\delta(a), b) & if $a\le b$ \\
                  (a, \delta(b)) & otherwise
                \end{cases*}
\end{align*}
}%
\end{minipage}
\begin{minipage}{0.55\textwidth}
\begin{lstlisting}
def mul [has_hmul α β γ] (a : G E α) (b : G E β)
        : G E γ := {
  index := BinOp.max a.index b.index,
  value := a.value ⋆ b.value,
  ready := a.ready && b.ready && a.index == b.index,
  next  := Prog.if
             (a.index < b.index ||
             (a.index == b.index && a.ready.not))
                  a.next
                  b.next,
  valid := a.valid && b.valid,
  init  := a.init; b.init }

instance [has_hmul α β γ] : has_hmul
  (G E α) (G E β) (G E γ) := ⟨mul⟩
\end{lstlisting}
\end{minipage}
\caption{
  Our implementation of multiplication for code-generating stream objects (right).
  The definition of stream multiplication given earlier is reproduced on the left for comparison.
  This definition generalizes to arbitrary nested streams via typeclass resolution using the declaration shown below the definition.
  \label{fig:lean-mul}
}
\end{figure}

The essential feature of our semantics is that it is compositional:
each building block of variable contraction expressions is implemented as an operator on streams.
We preserve this essence in the design of the compiler.
Using the stream IR \ttG{} just described, the stream combinator definitions can be  mechanically reinterpreted as a compiler
by translating each one to act on \ttG{} objects.
That is, each operator is implemented as a function that takes one or two stream IR objects and returns another.
Moreover, nested streams are handled by simply nesting this datatype: for example, a two-level matrix inhabits \ttt{G E (G E E)}.
\cref{fig:lean-mul} demonstrates the ease of compilation by showing the complete implementation of stream multiplication.
The fields within \ttt{\{...\}} define the components of the product of input streams \ttt{a,b}.
The definition is valid for all \ttG{} objects with value types that can be multiplied;
hence, inductively, it defines multiplication of arbitrary nested streams
The typeclass instance shown below the definition enables the Lean elaborator to automatically derive appropriate multiplication operations.

The compilation strategy avoids the classic expression problem: the issue of simultaneously extending
a language with new types and new functions or methods on those types.
In our approach, new stream types are encoded as \emph{data:} each one is a new definition of type $G~\iota~\alpha$.
New methods are defined as operators on streams, and methods are composed primarily via typeclass inference.
This method of  composition encourages experimentation and makes the system loosely coupled:
entire definitions can be deleted, and only those programs which make use of them fail to compile,
while others remain well-defined.

\subsection{Front End}

\begin{figure}[b]
\begin{minipage}[c]{0.15\textwidth}
  {\footnotesize
\begin{align*}
  \textrm{mmul1:} &\sum_k A_{ij}B_{jk} \\
  \textrm{mmul2:} &\sum_k A_{ik}B_{jk} \\
  \textrm{ttv:} &\sum_k C_{ijk}v_k \\
  \textrm{ttm:} &\sum_k C_{ijl}A_{kl} \\
  \textrm{mttkrp:} &\sum_{j,k} C_{ijk}A_{jl}B_{kl} \\
  \textrm{inner3:} &\sum_{i,j,k} C_{ijk}C'_{ijk} \\
\end{align*}
  }
\end{minipage}
\qquad
\begin{minipage}[c]{0.75\textwidth}
\begin{lstlisting}
-- Tensor Examples
-- index ordering: i, j, k, l
def mmul1  := Σ j $ (A : i →ₛ j →ₛ R) ⋆ (B : j →ₛ k →ₛ R)
def mmul2  := Σ k $ (A : i →ₛ k →ₛ R) ⋆ (B : j →ₛ k →ₛ R)
def ttv    := Σ k $ (C : i →ₛ j →ₛ k →ₛ R) ⋆ (v : k →ₛ R)
def ttm    := Σ l $ (C : i →ₛ j →ₛ l →ₛ R) ⋆ (A : k →ₛ l →ₛ R)
def mttkrp := Σ j $ Σ k $ (C : i →ₛ j →ₛ k →ₛ R) ⋆
                   (A : j →ₛ l →ₛ R) ⋆ (B : k →ₛ l →ₛ R)
def inner3 := Σ i $ Σ j $ Σ k $
    (C : i →ₛ j →ₛ k →ₛ R) ⋆ (C : i →ₛ j →ₛ k →ₛ R)

-- alternative declaration style:
def M1 : i →ₛ j →ₛ R := A
def M2 : j →ₛ k →ₛ R := B
def mat_mul_alt := Σ j (M1 ⋆ M2)

-- missing index leads to type elaboration error:
def mat_mul_err := Σ l (M1 ⋆ M2)

-- a more informative tensor type
def image_type := row →ₛ col →ₛ channel →ₛ intensity
\end{lstlisting}
\end{minipage}
\caption{Multiplicative example expressions from \cite{kjolstad2017tensor}.
  On the right, Lean code implementing each expression.
  These expressions can be compiled to efficiently executable code.
  Note that \ttt{(\$)} denotes function application and $(I \to_s V)$ denotes the type $\S(I,V)$.
  These examples illustrate the notation for summation, multiplication, and implicit replication.
  The final example fails to type check because the desired summation index is missing from the expressions.
  Index names can be informative and refer to an arbitrary finite set.
  \label{fig:lean-examples}}
\end{figure}

The DSL user writes expressions in a syntax inspired by named-index notation \cite{namedtensornotation}.
Traditional matrix and tensor notation requires users to remember the positional location of
each index they refer to;
in contrast, named index notation requires that each index be given a name, and the relative positions can be forgotten.
Recalling the analogy between array access and function application,
the two approaches can be compared to an \emph{untyped, positional}-argument strategy
versus a \emph{typed, keyword}-argument strategy.
Naming allows replication (usually called broadcasting in array programming~\cite{harris2020array}) to be implicit.
Named indices also convey useful information about the structure of data,
and many reshape transformations required by typical numerics libraries can be omitted.
We give several examples in \cref{fig:lean-examples}.
For our simple expressions from tensor algebra, we use traditional single-letter index names.
We also give one example of an array type with richer index labels.

We implement automatic replication operator insertion using typeclasses that merge the indexing types of \ttt{⋆} operands.
Similarly, the \ttt{map} operators needed to apply summation at the appropriate level are also automatically inserted.
When an attempt is made to sum over a missing index, the expression fails to typecheck.

The extensible syntax of Lean makes writing programs as embedded expressions quite natural, and its dependent type system makes it possible to embed type constraints and inference problems.

\subsection{Code Generation}
To compute results, we need two further ingredients: data format iteration and loop construction.
Our implementation provides several definitions for primitive stream types.
As demonstrated by \citet{chou2018}, many common storage formats can be decomposed by level.
We provide a compositional implementation of the compressed level format, which enables DCSR (two-level sparse matrices)
and arbitrarily deeply nested sparse streams.

To compute the entries of an output, we cannot simply aggregate functions
as the stream evaluation map $\semn{-}$ does.
Instead, we represent this process by
parametrizing the code generating function \ttt{eval} (\cref{fig:lean-loop}) by an abstract \emph{location} at which to accumulate results.
We generate one loop for each index $i_n$ appearing in the input expression.
At each execution of the loop corresponding to $i_n$, this location is specialized with the current value of the index $x \in I_{i_n}.$
The key functions which accomplish this are reproduced fully in the figure.

\subsection{Evaluation}
\begin{figure}
\begin{tabular}{c | c}
kernel & runtime ratio to TACO \\
\hline
mmul1  & 1.12 \\
mmul2  & 0.96 \\
ttv    & 1.03 \\
ttm    & 0.97 \\
mttkrp & 0.88 \\
inner3 & 1.67 \\
\end{tabular}
  \caption{Performance of \ttt{Etch-}generated code relative to TACO. Lower is better.
    \label{fig:perf_data}
  }
\end{figure}

To evaluate expressiveness and performance, we compare to the TACO sparse tensor algebra compiler.
TACO handles a variety of level formats and arbitrary problems written in \emph{concrete index notation}.
TACO generates portable C code.
Its implementation is approximately 25KLOC of C++.

Our prototype compiler handles arbitrary variable contraction expressions.
It consists of less than 200 lines of stream-manipulation code,
plus a simple translator from \ttt{Prog} to C++ and some supporting code for the front end notation.

Although \ttt{Etch} has less generality than TACO at this time, it is already sufficient to reimplement fundamental benchmarks.
\cref{fig:lean-examples} shows the 2\textsuperscript{nd}- and 3\textsuperscript{rd}-order multiplicative
benchmark expressions from \cite{kjolstad2017tensor} alongside their Lean implementations.
We invoke TACO using its preferred sparse level format
and evaluate expressions on synthetic sparse tensor data.
We observe that our system is able to generate structurally equivalent code to TACO.
As a result, the observed performance on most tests is within a factor of 10\% and at most 70\% (\cref{fig:perf_data}).
Our compiler also generates short, readable output code.
The most complex example shown here (MTTKRP) is 38 lines.

\section{Correctness}
\label{sec:proofs}

In this final section, we prove that our stream combinators compute the variables they are expected to compute.
We use the language of {universal algebra} \cite{denecke2009universal} to state our correctness theorem.
Stating the theorem and its proof requires three steps:
we formalize the three contraction operators as a {signature} $\signature$,
present our two operator implementations as separate {algebras} over the signature,
and show that $\semn{-}$ is a {homomorphism} between the algebras.

The first two steps have essentially already been done, so resolving them is just a matter of collecting the details.
The key work occurs in showing that $\semn{-}_\alpha$ (\cref{def:stream_eval_final}) is a homomorphism in \cref{sec:hom}.

\subsection{Universal Algebra}

Universal algebra is a general framework for describing algebraic structures.
Many structure types (such as groups, semirings, and vector spaces) can be described as a collection of atomic types
known as \emph{sorts} together with some operations and identities that they satisfy;
taken together these components form a \emph{signature}.
Each operator is given a type that is parametrized by the sorts it can be applied to.
Here, we give a signature for the variable contraction operators.

\begin{definition}[The Contraction Signature $\signature$]
  Fix an integer $m$.
  \begin{itemize}
  \item
  Each subset of indices $S \subseteq \{1,\ldots, m\}$ denotes a {sort} or \emph{type}.
  The set of sorts is called $\sorts$.
  \item
  $\signature$ contains the following (families of) operators:
  \begin{align*}
    (\cdot) & : S \x S \to S \textrm{ for all $S$} \\
    \sigop_i &: S\cup i \to S \textrm{ for all $S$ not containing $i$} \\
    \repop_i &: S \to S \cup i \textrm{ for all $S$ not containing $i$} \\
  \end{align*}
  \item $\signature$ contains no equalities.
  \end{itemize}
\end{definition}

The purpose of a signature is to specify a collection of symbolic operations.
Any concrete collection of sets and operators that match the types and satisfy the identites is known as an \emph{algebra}.

\begin{definition}
An \emph{algebra} $A$ over the signature $\signature$ consists of a set $A_S$ for each sort $S\in\sorts$
and a function of the appropriate type for each operator:

\begin{itemize}
\item For each $S,$ a function $(\cdot)_S : A_S\x A_S\to A_S$.
\item For each $i, S\notni i,$ a function $\sigop : A_{S\cup i}\to A_S.$
\item For each $i, S\notni i,$ a function $\repop : A_{S}\to A_{S\cup i}.$
\end{itemize}

Each $A_S$ is called a \emph{carrier set}.
\end{definition}

There is a natural notion of \emph{algebra map} that generalizes the various notions of a homomorphism (group homomorphism, linear map, etc.) in algebra.

\begin{definition}
 A \emph{morphism of algebras} or \emph{homomorphism} or \emph{map} between $\signature-$algebras $A, B$, denoted $f : A\to B$,
 is a collection of functions $f_S : A_S\to B_S$ which commute with all operators in the signature
  (subscripts omitted):
  \begin{align*}
  f(a\cdot b) &= f(a)\cdot f(b) \\
  f(\sig{i}{a}) &= \sig{i}{f(a)} \\
  f(\rep{i}{a}) &= \rep{i}{f(a)}.
  \end{align*}
\end{definition}

Given a signature, the \emph{term algebra} is a syntactic algebra: each set is simply composes from well-typed expressions that can be formed using the operators
and a collection of symbolic variables.
This algebra is what we denoted $\L$ in \cref{fig:overview}.
It is what we refer to as the language of contraction expressions.

\begin{definition}[Contraction Signature Term Algebra]
  \label{def:term_algebra}
  Let $X$ be a set; we call the elements \emph{symbolic variables}.
  Let $\tau : X \to \sorts$ be an assignment of sorts to the variables.
  The \emph{term algebra} $\L[X]$ \cref{goguen1977initial} is an algebra over $\signature$ consisting of
  well-typed terms freely assembled from variables and operators in $\signature$.
\end{definition}

\subsection{The Variable Algebra}
The variable algebra encodes the natural interpretation of the variable contraction operators.
It describes how to evaluate an arbitrary contraction expression as a function.
\label{sec:algebra_var}
\begin{definition}[The Variable Algebra $\T$]
  \label{def:variable_algebra}
Assume finite index sets $I_1, \ldots, I_m$ and a semiring $R$.

Let $\pi_i : I_{S\cup \{i\}} \to I_S$ be the projection function defined for $i, S\notni i$.

The variable algebra $\mathcal{T}$ consists of
\begin{align*}
\mathcal{T}_S &= \{ V : I_S \to R \} \\
(V_1 \cdot V_2)(i) &= V_1(i) \cdot V_2(i) \\
(\sig{i}{f}) &= \sum_{x\in I_i}f(x) \\
(\rep{i}{f}) &= f\circ \pi_i.
\end{align*}
This is an unmodified repackaging of the definitions of \cref{sec:semantics}.
\end{definition}

\subsection{The Stream Algebra}

The stream algebra is defined in three steps.
First, we define a well-behaved subset of streams (the \emph{simple streams}).
Then, we define the algebra using the sets and operators introduced in \cref{sec:stream_evaluation}.
Afterwards, in \cref{sec:hom}, we show that the set of simple streams is closed under the operators.
\subsubsection{Simple Streams}
From now on, we are interested in streams $q$ that meet three simple conditions.
The first, \emph{finiteness}, ensures that their evaluation is defined.
The second, \emph{monotonicity}, ensures that they traverse their index sets in the globally defined lexicographic order,
which is needed for efficient multiplication and ensures the ordered traversal condition of fusion (\cref{def:fusion}).
The last technical condition guarantees that multiplication is well-defined.
Formally:

\begin{description}
\item[Finite:] $q$ must reach a terminal state, so $\sem{q}$ is defined.
  Furthermore, the terminal state $t$ must satisfy $\ready(t) = \bot$
  so that $\sem{t} = \semm{t} + \sem{\delta(t)} = 0$.
\item[Monotonic:] For all $r$ reachable from $q$, $$\i(r) \le \i(\delta(r)).$$
\item[Reduced]: If $\reaches{q}{r}$ and $\reaches{r}{s}$, $\ready(r) = \ready(s) = \top$, and $\i(r)=\i(s)$ then $r = s$.
\end{description}

\begin{definition}[Simple Indexed Streams]
  If a stream satisfies all of these properties we call it \emph{simple}.
  Notice that if $q$ is simple, $\delta(q)$ is simple.
  Redefine $\streams{I}{V}$ to denote the set of simple streams of type $I\to V$.
\end{definition}

\remark
Any finite stream with terminal state $t$ such that $\ready(t)=\top$ can be modified in order to satisfy the finiteness condition:
augment the stream with a new state $t'$, $\ready(t')=\bot, \delta(t')=t'$ and $\delta(t) = t'.$
Thus this restriction causes no loss of generality and we do not check it explicitly in our later proofs.
The property is chosen so that all simple streams satisfy the following identity:
$$\sem{q} = \semm{q} + \sem{\delta(q)}.$$

The components of the stream algebra have been defined earlier, so
this definition simply combines \cref{def:nested_streams} and \cref{def:nested_operators}.

\begin{definition}[The Stream Algebra $\S$]
  \label{def:stream_algebra}

Assume finite index sets $I_1, \ldots, I_m$ and a semiring $R$.

The simple stream algebra $\mathcal{S}$ consists of:
\begin{itemize}
  \item Each sort $S\subseteq\numbers{m}$ corresponds to an ordered sequence $\alpha(S)$ (its elements, in order).
    We define $$\S_S = \{q \in \S_{\alpha(S)}\mid q \textrm{ is simple}\}.$$
  \item The operators are $(\cdot),\sigop_i,\repop_i$.
\end{itemize}
It is reasonably straightforward to check that applying any operator to a simple stream yields a simple stream.
This is done in the following operator-specific subsections of \cref{sec:hom}. 
\end{definition}

\subsection{The Correctness Theorem}
In the remainder of the section, we use $\semnop(-)$ as an alternate notation for $\semn{-}.$

Informally, the correctness theorem states that for any collection of streams,
a contraction expression evaluated using stream combinators and then $\semnop$ gives the same result as
first evaluating the streams with $\semnop$ and then applying the variable combinators.
The variable combinators (\cref{def:variable_algebra}), which are simple operations defined on functions, serve as an easy to understand semantics.

The collection of streams is formalized as a set $X$ of symbolic variables with a type assignment
$\tau : X\to T$ mapping each symbol $x$ to a sort $\tau(x)\subseteq\numbers{m}$
and a function $v : X\to \S$ mapping $x$ to a stream in $\S_{\tau(x)}$.
The arbitrary contraction expression is formalized as an element of $\L[X]$.

The result is a simple consequence of the fact that $\semnop$ is a homomorphism of $\signature-$algebras.
This is proved in the following subsection.
We use the following well known fact about term algebras:

\begin{lemma}
  First let $v_\L : X\to \L[X]$ name the function which includes $X$ in the term algebra.
  The term algebra $\L[X]$ is \emph{initial}~\cite{goguen1977initial}:
  for any other $\signature-$algebra $A$ and function $v : X\to A$,
  there is a unique algebra map
  $$\overline{v} : \L[X]\to A$$
  such that $\overline{v}\circ v_\L = v$.
\end{lemma}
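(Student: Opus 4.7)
The plan is to prove this standard initiality result by structural induction on terms. Because $\L[X]$ is the term algebra freely generated by $X$, every element is either of the form $v_\L(x)$ for some $x\in X$ or else of the form $\op(t_1,\ldots,t_k)$ where $\op$ is one of the three operators of $\signature$ (namely $\cdot$, $\sigop_i$, or $\repop_i$) applied to previously-constructed subterms of the appropriate sorts. Crucially, this decomposition is unique, which is what will make the recursive definition of $\overline{v}$ well-formed.

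First I would define $\overline{v}$ by recursion on this term structure, setting $\overline{v}(v_\L(x)) = v(x)$ in the base case and $\overline{v}(\op(t_1,\ldots,t_k)) = \op_A(\overline{v}(t_1),\ldots,\overline{v}(t_k))$ in the inductive case, using the interpretation of $\op$ provided by the algebra $A$. A quick sort check confirms that $\overline{v}$ maps $\L[X]_S$ into $A_S$ for each sort $S$, since each operator of $\signature$ has a fixed sort signature and $\op_A$ has the matching signature in $A$. The equation $\overline{v}\circ v_\L = v$ holds by the base case of the definition.

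Next I would check that $\overline{v}$ is a homomorphism of $\signature$-algebras. For each operator $\op$, the required equation $\overline{v}(\op(t_1,\ldots,t_k)) = \op_A(\overline{v}(t_1),\ldots,\overline{v}(t_k))$ is exactly the recursive clause of the definition, so this is immediate.

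Finally, for uniqueness, suppose $g : \L[X]\to A$ is any algebra map with $g\circ v_\L = v$. I would show $g = \overline{v}$ by induction on term structure: on variables $g(v_\L(x)) = v(x) = \overline{v}(v_\L(x))$ by hypothesis, and on a compound term $\op(t_1,\ldots,t_k)$ the homomorphism property forces $g(\op(t_1,\ldots,t_k)) = \op_A(g(t_1),\ldots,g(t_k))$, which agrees with $\overline{v}$ by the inductive hypothesis applied to each $t_j$. The only real ``obstacle'' is being careful about the uniqueness of decomposition of terms, but this is built into the standard construction of $\L[X]$ from \cite{goguen1977initial} and requires no further argument here.
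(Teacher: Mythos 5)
Your argument is the standard structural-induction proof of initiality and it is correct: the recursive definition of $\overline{v}$ is well-formed because terms in $\L[X]$ decompose uniquely, the homomorphism and extension properties hold by construction, and uniqueness follows by induction using the homomorphism equations. The paper does not prove this lemma at all — it invokes it as a well-known fact with a citation to \cite{goguen1977initial} — so your write-up simply supplies the standard argument that the cited reference contains, with no divergence in approach.
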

This map $\overline{v}$ is often known as an interpretation.
It maps a syntactic term into the domain $A$
by applying each operator that appears to the interpretations of its parts.

Suppose we have a context $v : X\to\S$ assigning streams to symbolic variables.
The two methods of evaluating a contraction expression on these streams
that were mentioned at the beginning of this section are precisely the following two maps:
$$\semnop\circ~\overline{v} : \L[X]\to \T$$
and
$$\overline{\semnop\circ~v} : \L[X]\to \T.$$
\begin{theorem}[Correctness Theorem]
  \label{the:final}
  For all $v : X\to\S$,
  $$\semnop\circ~\overline{v} = \overline{\semnop\circ~v}.$$
\end{theorem}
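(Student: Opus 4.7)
The strategy is to reduce the theorem to a single homomorphism claim and invoke the universal property of the term algebra. By the initiality of $\L[X]$, any $\signature$-algebra map out of $\L[X]$ is uniquely determined by its restriction to the generators $X$. Both $\semnop\circ\overline{v}$ and $\overline{\semnop\circ v}$ restrict to $\semnop\circ v$ on $X$, so they coincide once $\semnop\circ\overline{v}$ is known to be a $\signature$-homomorphism. Since $\overline{v}$ is a homomorphism by construction, this reduces the entire theorem to the claim that $\semnop : \S\to\T$ is itself a $\signature$-homomorphism: it commutes with each of the three operator families $(\cdot), \sigop_i, \repop_i$ on every sort.

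To check the three commutations, I would induct on the index sequence $\alpha$ indexing the sort. \Cref{def:nested_operators} presents each nested operator as $\map_{\alpha'}$ applied to a top-level stream operator, and \cref{def:stream_eval_final} expresses $\semn{-}_{i::\alpha'} = \sem{-}_{\semn{-}_{\alpha'}}$. So the inductive step reduces to proving the three commutation identities for the top-level stream operators together with the parametric evaluation $\sem{-}_f$. At this single-level base, the replication identity is almost immediate from the definition of the constant stream; the summation identity follows by expanding $\sigop q$ as an iterated stream-sum over reachable states, provided we also know that $\sem{-}_f$ preserves binary stream addition $a+b$ (which is proved by the same technique as multiplication, below); and the multiplication identity is the main technical content.

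The hard part will be the identity $\sem{a\cdot b}_f = \sem{a}_f\cdot\sem{b}_f$ for single-level simple streams $a,b$. My plan is an inner induction on $\sizez(a\cdot b)$, unfolding both sides by the key identity $\sem{q} = \semm{q} + \sem{\delta q}$ and case-splitting according to $\delta(a\cdot b)$ in \cref{def:multiplication}. One supporting monotonicity lemma is required: for any simple stream $q$ and $x<\i(q)$, $\sem{q}_f(x)=0$; this allows us to conclude that cross terms $\semm{a}\cdot\sem{b}$ vanish whenever the indices of $a$ and $b$ differ. The case $\i(a)<\i(b)$ (and its symmetric sibling) is then routine, because $\semm{a\cdot b}=0$ as well and only the inductive residual $\sem{\delta a\cdot b}$ survives. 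The delicate subcase is $\i(a)=\i(b)$ with both streams ready, where $\semm{a\cdot b} = \i(a)\mapsto f(\v(a)\cdot\v(b))$ is nonzero; here the \emph{reduced} property of simple streams becomes essential, guaranteeing that after advancing exactly one of $a,b$ the surviving state never re-emits the shared index and so the shared contribution is not double-counted. Aside from this core induction, I would also verify closure of the simple-stream invariants under each operator so that $\S$ is truly a $\signature$-algebra, but these checks are routine case analyses on the operator definitions.
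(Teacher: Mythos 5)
Your proposal is correct and follows essentially the same route as the paper: reduce the theorem via initiality of $\L[X]$ to the claim that $\semnop$ is a $\signature$-homomorphism, handle the nested sorts through the map lemmas, and carry the main burden in the multiplication case by induction on the steps to a terminal state, using monotonicity to kill cross terms and reducedness to avoid double-counting at a shared ready index. The supporting details you identify (closure of simple streams under the operators, summation via binary stream addition) are exactly the ones the paper checks.
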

\begin{proof}
  By \cref{the:hom}, $\semnop$ is a map of algebras;
  hence the left and right hand sides are both maps $\L[X]\to\T$.
  When precomposed with $v_\L$, they both give the same variable assignment $\semnop\circ~v$;
  hence, by the initiality property of $\L[X]$, they must be the same map.
\end{proof}

\subsection{The Correctness Proof}
\label{sec:hom}
In this section, we do most of the work for the correctness result by showing that each stream operation commutes with $\semnop:$
\begin{theorem}
  The function $\semn{-}$ (also written $\semnop$) is a homomorphism of $\signature-$algebras:
  \label{the:hom}
  \begin{align*}
  \semn{a\cdot b}   &= \semn{a}\cdot\semn{b} \\
  \semn{\sig{i}{a}} &= \sig{i}{\semn{a}} \\
  \semn{\rep{i}{a}} &= \rep{i}{\semn{a}}
  \end{align*}
  For all simple streams $a,b$.
\end{theorem}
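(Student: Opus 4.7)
The plan is to prove each of the three identities separately, with multiplication as the centerpiece. I would structure the argument as an outer induction on the nesting depth $|\alpha|$ (since $\semn{-}_\alpha$ is defined recursively), with an inner induction on stream size for the single-level statements. Before tackling any operator, I would establish two small lemmas about simple streams that package the force of monotonicity and reducedness. Lemma A states that for a simple $q$, $\semn{q}(x) = 0$ whenever $x < \i(q)$; this follows by unfolding $\semn{q}$ as a sum over reachable states and invoking monotonicity. Lemma B states that when $\ready(q) = \top$, $\semn{\delta(q)}(\i(q)) = 0$, so that $\semn{q}(\i(q))$ reduces to the single contribution coming from $q$ itself; here reducedness is essential, since any ready state reachable from $\delta(q)$ sharing the index $\i(q)$ would by reducedness have to coincide with $q$ and force $q$ to be terminal, contradicting $\ready(q)=\top$. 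I would also record, by the same pattern used below for multiplication, the auxiliary fact that $\semn{-}$ preserves the binary stream addition of \cref{sec:combinators}; this is used in the summation argument.

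For multiplication, I would induct on $\size^0(a) + \size^0(b)$, unfold the left-hand side via the identity $\semn{a\cdot b} = \semm{a\cdot b} + \semn{\delta(a\cdot b)}$, and expand the right-hand side by distributing $\semn{a} = \semm{a} + \semn{\delta(a)}$ against the analogous expression for $b$. A case analysis mirroring the definition of $\delta(a\cdot b)$ reduces each situation to three checks: (i) the $\delta$-step matches the inductive hypothesis; (ii) when the indices differ, the cross terms $\semm{a}\cdot\semn{b}$ and $\semn{a}\cdot\semm{b}$ vanish because one factor is supported at a single index and the other is zero there by Lemma A; and (iii) when $\i(a)=\i(b)$ and both are ready, Lemma B gives $\semn{a}(\i(a))\cdot\semn{b}(\i(b)) = \v(a)\cdot\v(b)$, so that $\semm{a}\cdot\semn{b}$ matches $\semm{a\cdot b}$ at $\i(a)$ while all other indices contribute zero. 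Because the product $\v(a)\cdot\v(b)$ at each state is itself a product in the value type, the outer induction on $|\alpha|$ supplies correctness of that inner operation at the next level.

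Replication and summation are comparatively direct. For $\rep{i}{a}$, the constant stream construction forces every state to carry the value $a$ while cycling through $I_i$, so unfolding $\semn{\rep{i}{a}}$ gives $\sum_{x\in I_i} (x\mapsto \semn{a})$, which is exactly $\semn{a}\circ\pi_i = \rep{i}{\semn{a}}$. For $\sig{i}{a}$ with $\alpha=[]$, I would unfold $\sigop a$ as an iterated $+$ over reachable states of $a$, push $\semn{-}_\beta$ through using the addition-preservation lemma, and observe that the resulting sum over ready states matches $\sum_{y\in I_i}\semn{a}(y,\cdot)$ pointwise. The nested cases for both operators follow by induction on $|\alpha|$ through the $\map_\alpha$ presentation in \cref{def:nested_operators}.

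The main obstacle will be the multiplication case: the case split on $(\i(a),\ready(a))$ versus $(\i(b),\ready(b))$, combined with the simultaneous tracking of the $\semm$ cross terms, the $\delta$-recursion, and the algebraic distribution, requires care, and the argument only closes because Lemmas A and B make the off-diagonal contributions disappear at precisely the right indices. A secondary but unavoidable bookkeeping obligation is verifying that each stream operator preserves simplicity, so that the inductive hypothesis actually applies to the resulting stream after each $\delta$-step; this obligation is asserted but not spelled out in \cref{def:stream_algebra}, and it depends subtly on the strict inequalities built into the pre-order that drives the multiplicative and additive $\delta$ rules.
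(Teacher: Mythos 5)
Your plan follows essentially the same route as the paper's proof: the same reduction of the nested operators to the top level via the $\map$ lemmas, the same induction on the number of steps to a terminal state for multiplication with a case split mirroring $\delta(a\cdot b)$, the same uses of monotonicity (your Lemma A) and reducedness (your Lemma B) to annihilate the off-diagonal terms, and summation reduced to correctness of binary stream addition. The only substantive difference is presentational: you make explicit the outer induction on nesting depth needed to conclude $\semnn{a\cdot b}=\semnn{a}\cdot\semnn{b}$ at ready states, which the paper leaves implicit.
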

Notice that on the left, each operator acts on streams, while on the right, each acts on variables.

The proof occupies the remainder of this section.
It is accomplished by proving a series of simple lemmas about the operators used to implement the three primary operators.

In addition, we show that the simple streams are closed under each of the operators, so $\S$ really is an algebra.

\subsubsection{Map Lemmas}

The $\{\sigop,\repop\}$ operators are defined on nested streams using $\map$ over a certain prefix of unaffected indices.
Implicitly, the variable operators are as well.
The following definition clarifies this point, and \cref{lem:map} allows us to ignore this complication when proving correctness:

\begin{definition}[Variable map]
  For $f : \T_\alpha \to \T_\beta$, define $\map_i f : \T_{i::\alpha}\to \T_{i::\beta}$ by
  $$(\map_i f)(V) = f \circ V.$$
  By extension, define
  \begin{align*}
  \map_{[]} &= \id \\
  \map_{i::\gamma} &= \map_i\circ\map_\gamma
  \end{align*}
\end{definition}

This is linked to stream $\map$ by the following identity:
\begin{lemma}
  \label{lem:map_var}
 $$(\map_i f) \sem{q} = \sem{\map_i f~q}.$$
\end{lemma}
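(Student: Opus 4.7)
The plan is to prove this by unfolding both sides at an arbitrary $x \in I_i$ and matching the resulting sums term by term.

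First I would note the key structural fact: by the definition of the stream $\map_i$, the stream $\map_i f~q$ shares with $q$ its state space, index, ready, and successor functions, differing only in that the value function is $f \circ \v$ in place of $\v$. Thus unrolling $\sem{\map_i f~q}$ via \cref{def:eval-v1}, the contribution of each reachable state $r$ is $\i(r) \mapsto \sem{f(\v(r))}$ when $r$ is ready and $0$ otherwise; the set of reachable states, their indices, and their readiness are all unchanged from $q$.

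Next I would unfold the left-hand side. By the variable-$\map_i$ definition, $((\map_i f)\sem{q})(x) = f(\sem{q}(x))$. Expanding $\sem{q}(x)$ as the sum of $\sem{\v(r)}$ over reachable ready states $r$ with $\i(r) = x$, the identity we want at $x$ becomes
$$f\Bigl(\sum_{r}\sem{\v(r)}\Bigr) = \sum_{r}\sem{f(\v(r))}.$$

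The main obstacle is a bookkeeping one: the symbol $f$ is implicitly overloaded on the two sides, denoting a variable-level map $\T_\alpha \to \T_\beta$ on the left and a stream-level map $\S_\alpha \to \S_\beta$ on the right. The lemma really packages a general naturality principle: whenever $f$ commutes with $\sem{-}$ (i.e., $\sem{f(s)} = f(\sem{s})$) and is additive over the finite sums produced by $\sem{-}$, the two sides agree. In the subsequent applications to the contraction operators, the required commutation is either immediate or is exactly the single-outer-index base case that each operator-specific lemma establishes first; once that is in place, distributing $f$ across the finite sum closes out the proof.
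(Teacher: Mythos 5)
Your reduction is the same as the paper's: unfold $(\map_i f)\sem{q}$ to $f\circ\sum_r(\i(r)\mapsto\v(r))$ and $\sem{\map_i f~q}$ to $\sum_r(\i(r)\mapsto f(\v(r)))$ over the unchanged set of reachable states, then push $f$ through the finite sum. The paper's proof is exactly this one-line computation, and your observation that the last step silently needs something of $f$ --- $f(0)=0$ at indices with no ready state, and additivity if several ready states could share an index --- is a fair criticism of the paper, which gets away with it because $q$ is simple (hence reduced, so at most one term survives per index) and every $f$ the lemma is applied to preserves $0$.

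Two things in your writeup are off, however. First, you insert an extra layer of evaluation that is not in the statement: the contribution of a ready state $r$ to $\sem{\map_i f~q}$ is $\i(r)\mapsto f(\v(r))$, not $\i(r)\mapsto\sem{f(\v(r))}$, and $\sem{q}(x)$ is a sum of raw values $\v(r)$, not of $\sem{\v(r)}$. Because of this you conclude that the lemma needs the hypothesis that ``$f$ commutes with $\sem{-}$''; it does not. That commutation hypothesis is precisely the assumption $\semnop(f(q)) = f'(\semnop(q))$ of \cref{lem:map}, which is where the recursive evaluation genuinely enters --- you have folded part of \cref{lem:map} into \cref{lem:map_var}. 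Second, the overloaded symbol in the statement is $\map_i$ (variable-level postcomposition on the left, stream-level value map on the right), not $f$: the same function on values appears on both sides, and in the one place the lemma is invoked that function is the variable-level $f'$ on both sides. Neither slip derails the computation you would actually carry out, but as written your proof attributes to this lemma a hypothesis it neither has nor needs.
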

\begin{proof}
$$(\map_i f)\sem{q} = f \circ\left(\sum_r(\i(r)\mapsto\v(r))\right) = \sum_r(\i(r)\mapsto f(\v(r))) = \sem{\map_i f~q}.$$
\end{proof}

\begin{lemma}
  \label{lem:map}
  Suppose we have functions $f : \S_\alpha\to\S_\beta$ and $f' : \T_\alpha\to\T_\beta$ that satisfy
  $$\semnop(f(q)) = f'(\semnop(q))$$ for all $q\in\S_\alpha$.
  Then for all $i\notin(\alpha\cup\beta)$ and $q\in \S_{i::\alpha}$,
  $$ \semnop~(\map_i f ~ q) = (\map_i f')(\semnop~q). $$
  By extension, for any sequence of indices $S$ such that the following expressions are well-defined,
  $$ \semnop~(\map_S f ~ q) = (\map_S f')(\semnop~q). $$
\end{lemma}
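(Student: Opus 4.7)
The plan is a straightforward unfolding argument, organized around the pointwise sum formula for $\sem{-}_g$ and the hypothesis applied termwise, concluded by an induction for the extension to arbitrary prefixes $S$.

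First I would reduce to the case $S = [i]$. The extension then follows from the identity $\map_{i::\gamma} = \map_i \circ \map_\gamma$ by induction on $|S|$ (with trivial base $\map_{[]} = \id$): the inductive step applies the single-index statement to the pair $(\map_\gamma f,\; \map_\gamma f')$, whose compatibility with $\semnop$ is exactly the inductive hypothesis.

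For the base case with $q \in \S_{i::\alpha}$, unfolding \cref{def:stream_eval_final} gives $\semnop(q) = \sem{q}_{\semnop}$ on the right (with the inner $\semnop$ evaluating $\S_\alpha$-values into $\T_\alpha$) and $\semnop(\map_i f ~ q) = \sem{\map_i f ~ q}_{\semnop}$ on the left (with the inner $\semnop$ evaluating $\S_\beta$-values into $\T_\beta$). Since $\map_i f$ changes only the value function of $q$, leaving its state space, index function, and readiness intact, expanding the sum formula of \cref{def:eval-v1} turns the left-hand side into $\sum_{r}\, \i(r) \mapsto \semnop(f(\v(r)))$, summed over reachable ready states $r$. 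The hypothesis replaces each term by $\i(r) \mapsto f'(\semnop(\v(r)))$. I would then evaluate both sides pointwise at each $x \in I_i$: the \textbf{reduced} and \textbf{monotonic} properties of simple streams guarantee that at most one reachable ready state has $\i(r) = x$, so the sum collapses to a single term, and both expressions yield $f'(\semnop(\v(r)))$.

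The main subtlety I anticipate is the behavior at an index $x \in I_i$ where no reachable ready state of $q$ sits: then the left-hand side is $0$, while the right-hand side equals $f'(0)$. The argument therefore implicitly needs $f'(0) = 0$, which is not stated as a hypothesis but holds trivially for the operators $\sigop$ and $\repop$ to which this lemma is later applied, since both summing and projecting the zero variable yield zero. I would either register this as an added (mild) assumption on $f'$, or verify it directly at each use site; with that point disposed of, the base case and the induction together give the full statement.
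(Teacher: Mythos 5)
Your proof is correct and follows essentially the same route as the paper's: unfold $\semnop$ on $\S_{i::\alpha}$, apply the hypothesis to each value termwise, and pull $f'$ back out of the evaluation sum (the paper packages this last step as Lemma~\ref{lem:map_var}), with the extension to sequences handled by the evident induction on $\map_{i::\gamma} = \map_i\circ\map_\gamma$. Your observation that $f'(0)=0$ is implicitly required at indices where $q$ has no ready state is a genuine point the paper glosses over --- its Lemma~\ref{lem:map_var} silently commutes $f$ past the sum $\sum_r$ --- and it does hold for $\sigop_i$ and $\repop_i$, the only instantiations this lemma is applied to.
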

\begin{proof}
  We insert indices for $\semnop$ that are implicit above:
  \begin{align*}
    \semnop_{i::\beta}(\map_i f~q)
    &= \sem{\map_i(\semnop_\beta\circ f)~q} &\textrm{(definition)} \\
    &= \sem{\map_i(f'\circ \semnop_\alpha)~q} & \textrm{(assumption)} \\
    &= \map_i f'~ \sem{\map_i~\semnop_\alpha~q} &\cref{lem:map_var} \\
    &= \map_i f'~(\semnop_{i::\alpha}~q). &\textrm{(definition)} \\
  \end{align*}
\end{proof}

Having shown this, we now assume that $\alpha = []$ (as in \cref{def:nested_operators})
for the remaining proofs of operator correctness.

\subsubsection{Replication Correctness}
Replication is easy to check: it boils down to the fact that the evaluation of a stream which produces the same value at every index is a constant function.

\begin{theorem}
If $q\in \S$ is simple, then $\rep{i}{q}$ is simple.
\end{theorem}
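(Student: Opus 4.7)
The plan is to verify the three conditions (finite, monotonic, reduced) that define a simple stream. Since the nested replication operator is defined as $\repop_i = \map_\alpha(\repop)$ for a prefix $\alpha$, I would first reduce to the base case $\alpha = []$, observing that $\map_\alpha$ leaves the outer state space, index function, ready function, and successor function of a stream completely untouched: it only transforms the emitted values. Consequently all three simplicity conditions (which are statements about $\i$, $\ready$, and $\delta$ of the stream itself and not about its values) are preserved by $\map$. An induction on the length of $\alpha$ then reduces the theorem to showing that $\repop(v)$ is simple whenever $v$ is, for $v \in \S_\beta$.

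For the base case, let $\repop(q) = (\numbers{k}, 1, \i, \v, \ready, \delta)$ with $\i : \numbers{k}\to I_i$ order-preserving and bijective, $\v(r) = q$ constantly, $\ready(r) = \top$, and $\delta(r) = r+1$ under saturating addition. I would check each condition in turn. Monotonicity is immediate: $\delta(r) = r+1 \ge r$ in $\numbers{k}$, and since $\i$ is order-preserving, $\i(r) \le \i(\delta(r))$. Reducedness holds because $\i$ is injective on $\numbers{k}$, so the hypothesis $\i(r) = \i(s)$ forces $r = s$ regardless of the $\ready$ or reachability conditions. For finiteness, state $k$ is reached after finitely many steps from state $1$ and satisfies $\delta(k) = k$, so it is terminal; the formal requirement that $\ready(t) = \bot$ at the terminal state is handled by the standard augmentation remarked on just after the definition of simple streams (adjoin a sink state $t'$ with $\ready(t') = \bot$ and $\delta(t') = t'$), which the paper explicitly treats as implicit.

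The main obstacle, if any, is merely a bookkeeping one: being careful that the nested definition via $\map_\alpha$ does not introduce any hidden complication, and that the augmentation for the $\ready(t) = \bot$ requirement is invoked cleanly rather than derived directly from the definition of $\repop$. Once the $\map_\alpha$ reduction is made explicit, the verification is purely mechanical and follows from the order-preserving bijectivity of $\i$ together with saturating arithmetic on $\numbers{k}$.
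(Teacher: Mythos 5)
Your proof is correct and follows essentially the same route as the paper's: a direct verification of the three simplicity conditions for the constant stream, using that $\i$ is an order-preserving bijection on $\numbers{k}$ (giving monotonicity and reducedness) and that $k$ is terminal under saturating addition (giving finiteness, modulo the standard $\ready(t)=\bot$ augmentation the paper treats as implicit). Your explicit $\map_\alpha$ reduction and induction on the prefix length is bookkeeping the paper elides but is consistent with its treatment.
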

\begin{proof}\mbox{}
\begin{description}
  \item [Finite] The statespace is $\numbers{k}$ and $k$ is a terminal state, so the result is finite.
  \item [Monotonic] By assumption, the function $\i : \numbers{k}\to I_i$ is order preserving.
  \item [Reduced] By assumption, $\i$ is a bijection, so each state has a unique index value.
\end{description}
\end{proof}
\begin{theorem}
$$\semn{\rep{i}{a}} = \rep{i}{\semn{a}}$$
\end{theorem}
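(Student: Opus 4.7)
The plan is to first apply the Map Lemma (\cref{lem:map}) to reduce from an arbitrary nested replication $\rep{i}$ acting with prefix $\alpha$ to the base case $\alpha = []$. After this reduction, it suffices to analyze a single-level constant stream $\rep{i}{a}$, where $a \in \S_\beta$ is a stream value and $\rep{i}{a} \in \S_{i::\beta}$ is built via the constant-stream construction with state space $\numbers{k}$ and bijective order-preserving $\i : \numbers{k} \to I_i$.

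Next, I would expand $\semn{\rep{i}{a}}_{i::\beta}$ using \cref{def:stream_eval_final} together with \cref{def:eval-v1}. Because every reachable state $r \in \numbers{k}$ satisfies $\ready(r) = \top$ and $\v(r) = a$, every term in the stream-evaluation sum is nonzero and has the same value part, so it collapses to
\begin{equation*}
\semn{\rep{i}{a}}_{i::\beta} = \sum_{r \in \numbers{k}} \i(r) \mapsto \semn{a}_\beta.
\end{equation*}
The bijectivity of $\i$ then lets us reindex the sum over $x \in I_i$, yielding the function on $I_i \times I_\beta$ sending $(x, y) \mapsto \semn{a}_\beta(y)$.

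Finally, I would compare this against the variable replication from \cref{sec:semantics}, which sends a variable $V$ to the function $(x, y) \mapsto V(\pi_i(x, y)) = V(y)$. Taking $V = \semn{a}$ matches the expression above, completing the identity $\semn{\rep{i}{a}} = \rep{i}{\semn{a}}$.

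The only real subtlety is technical, concerning the terminal state: under saturating addition, the state $k$ has $\ready(k) = \top$ and $\delta(k) = k$, which would violate the finiteness requirement of a simple stream. This is handled by the standard modification described after the definition of simple streams, namely adjoining a new terminal state $t'$ with $\ready(t') = \bot$; this extra state contributes $0$ to the evaluation and leaves the sum unchanged. There is no conceptual obstacle beyond this; the proof is essentially a careful unfolding of definitions that exploits the constancy of $\v$ and the bijectivity of $\i$.
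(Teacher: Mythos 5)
Your proposal is correct and follows exactly the route the paper takes: reduce to the $\alpha = []$ case via \cref{lem:map}, then unfold the constant-stream definition of $\repop$ to see that evaluation yields the constant function $\rep{i}{\semn{a}}$ (the paper merely declares this step ``obvious,'' while you spell it out, including the harmless terminal-state adjustment). No gaps.
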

\begin{proof}
  Obvious from the definition of $\repop$ and \cref{lem:map}.
\end{proof}
\subsubsection{Multiplication Correctness}
Consider the stream $(a\cdot b)$.
Since it is simple (as we will show in a moment), it satisfies
\begin{align}
  \label{eval-identity}
  \semn{a\cdot b} = \semnn{a\cdot b} + \semn{\delta(a\cdot b)}.
\end{align}

The key proof intuition is this: since $a$ and $b$ are reduced,
as soon as $\semnn{a}\semnn{b}\neq 0$, we can immediately advance either stream.
If they were not reduced, we might need to multiply a series of terms from both streams;
but monotonicity and reducedness are sufficient to avoid this.

\begin{theorem}
  If $a,b\in \S_\alpha$ then $a\cdot b$ is also simple.
\end{theorem}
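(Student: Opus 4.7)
The plan is to verify each of the three simplicity conditions — finiteness with a $\bot$-ready terminal, monotonicity, and reducedness — for the product $a \cdot b$ on the state space $S_a \x S_b$, factoring each through the corresponding hypothesis on $a$ and $b$.

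Monotonicity is direct: in the $a \le b$ branch the successor $(\delta(a), b)$ has index $\max(\i(\delta(a)), \i(b)) \ge \max(\i(a), \i(b))$ by monotonicity of $a$, and the other branch is symmetric using monotonicity of $b$. Reducedness is similarly short: if $\reaches{(r_a, r_b)}{(s_a, s_b)}$ with both pair states ready and sharing a common pair-index, the conjunctive definition of pair-$\ready$ forces $\ready$ on all four component states together with $\i(r_a) = \i(r_b) = \i(s_a) = \i(s_b)$; projecting the path onto each coordinate yields $\reaches{r_a}{s_a}$ and $\reaches{r_b}{s_b}$, and applying reducedness of $a$ and $b$ individually gives $r_a = s_a$ and $r_b = s_b$, so the pair states agree.

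The main obstacle is finiteness: I need to exhibit a reachable $\delta$-fixed state at which $\ready = \bot$. The key idea is to introduce the measure $\mu(a', b') = d_a(a') + d_b(b')$, where $d_x(x')$ counts the number of $\delta$-steps from $x'$ to the terminal of component $x$ (well-defined and nonnegative because each of $a, b$ is finite and every reachable state lies on the finite $\delta$-orbit ending at the terminal). Case analysis on the two branches of the product's successor shows that $\delta(a', b') \ne (a', b')$ exactly when either the $a \le b$ branch fires with $a' \ne t_a$, or the other branch fires with $b' \ne t_b$; in either case $\mu$ strictly drops by one. Hence after at most $\mu(a, b)$ steps we reach a $\delta$-fixed state $(a^*, b^*)$ in which $a^* = t_a$ or $b^* = t_b$. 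Since $\ready$ already fails on the corresponding component terminal by hypothesis, the conjunctive pair-$\ready$ is $\bot$, completing the finiteness check and showing that the simple streams are closed under multiplication.
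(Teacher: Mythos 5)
Your proof is correct and takes essentially the same route as the paper's: monotonicity via monotonicity of $\max$, reducedness by forcing all four component indices to coincide and applying reducedness of $a$ and $b$, and finiteness because each non-trivial product transition advances one component. You are somewhat more careful than the paper on finiteness — the explicit measure $\mu$ and the check that the product's terminal state has $\ready=\bot$ are both left implicit there (the latter is dismissed in a remark as causing no loss of generality) — but this is a difference of detail, not of approach.
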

\begin{proof}\mbox{}\\
  \begin{description}
  \item[Finite] Since $a$ and $b$ reach a terminal state in a finite number of steps and each transition of $a\cdot b$ advances one or the other, $a\cdot b$ must reach a terminal state as well.
  \item[Monotonic] $a\cdot b$ is monotonic because $\i(a\cdot b) = \max(\i(a), \i(b))$ and $\max$ is monotone in both arguments.
  \item[Reduced] Suppose the state $a'\cdot b'$ is reachable from $a\cdot b$.
    If both states are ready then $\i(a)=\i(b)$ and $\i(a') = \i(b')$.
    If $\i(a'\cdot b') = \i(a\cdot b)$, then in fact $\i(a) =\i(a')$ also, so since $a$ is reduced, $a=a'$.
    Similary $b=b'$, so the states are equal.
  \end{description}
\end{proof}

\begin{theorem}
  For all $a,b\in\streams{I}{A}$,
$$\semn{a\cdot b} = \semn{a}\cdot\semn{b}$$
\end{theorem}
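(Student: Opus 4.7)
The plan is induction on $\sizez(a\cdot b)$, which is well-defined because $a\cdot b$ is simple. The workhorse is the identity $\sem{q} = \semm{q} + \sem{\delta q}$ from \cref{def:eval-v1}, valid whenever $q$ is not terminal. For the base case $a\cdot b$ is terminal; finiteness forces $\ready(a\cdot b) = \bot$, so $\semm{a\cdot b} = 0$ and hence $\semn{a\cdot b} = 0$. Moreover $\delta(a\cdot b) = (a, b)$ forces either $\delta a = a$ or $\delta b = b$, and finiteness applied to that stream makes its $\ready$-bit $\bot$, driving the corresponding factor of $\semn{a}\semn{b}$ to zero.

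For the inductive step I would unfold both sides. First I would check the pointwise identity $\semm{a\cdot b} = \semm{a}\cdot\semm{b}$: both sides equal $\i(a)\mapsto\v(a)\v(b)$ when both streams are ready with matching indices, and both vanish otherwise. Then the left-hand side becomes $\semm{a}\semm{b} + \semn{\delta(a\cdot b)}$, while bilinear expansion of the right-hand side yields
\begin{align*}
\semn{a}\semn{b} = \semm{a}\semm{b} + \semm{a}\semn{\delta b} + \semn{\delta a}\semm{b} + \semn{\delta a}\semn{\delta b}.
\end{align*}
Invoking the inductive hypothesis on $\delta(a\cdot b)$, which equals $(\delta a, b)$ when $a\le b$ and $(a, \delta b)$ otherwise, reduces the goal to a single cross-term cancellation per branch: $\semm{a}\semn{\delta b} = 0$ when $a\le b$, and $\semn{\delta a}\semm{b} = 0$ otherwise.

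These two vanishing claims form the heart of the argument. Each has a trivial disjunct where the relevant stream is not ready so its $\semm{-}$ is zero outright. The other disjunct hinges on index supports. In the first lemma, $\ready(a) = \top$ together with $a \le b$ forces $\i(a) < \i(b)$, and monotonicity of $b$ pins $\semn{\delta b}$ to indices strictly above $\i(a)$, so the two point-supported summands multiply to zero. The subtle subcase, which I expect to be the main obstacle, is the second lemma when the successor advances $b$ because both streams are ready and share an index: monotonicity alone only gives $\i \ge \i(a)$ on states reachable from $\delta a$, so equality of indices is a priori possible. Here reducedness must combine with finiteness: any ready $s$ reachable from $\delta a$ with $\i(s) = \i(a)$ would, via the reducedness clause applied to $a$ and $s$, force $a = s$, producing a cycle incompatible with $a$ lying on a path to a terminal state. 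Hence $\semn{\delta a}$ contributes nothing at index $\i(a)$ and the remaining cross-term vanishes.
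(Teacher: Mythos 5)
Your proof is correct and takes essentially the same route as the paper's: induction on the length of the run to a terminal state, the recurrence $\sem{q}=\semm{q}+\sem{\delta q}$, monotonicity to kill the cross-terms with mismatched indices, and reducedness (via the cycle-to-terminal-state contradiction) to kill the cross-term in the both-ready, equal-index case. Your uniform bilinear expansion into two symmetric vanishing lemmas is a cleaner factoring than the paper's four-way case split on the stream ordering, and it correctly tracks that $\delta$ advances $b$ rather than $a$ when both streams are ready at the same index, as the paper's own definition of $\le$ dictates.
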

\begin{proof}
We induct over the number of steps to reach a terminal state.
If $a\cdot b$ is terminal, it must be that $a$ and $b$ are both terminal, so $\semn{a\cdot b} = 0 = \semn{a}\semn{b}$.

Otherwise, first suppose $a < b$.
Then $\semnn{a\cdot b} = 0$ and $\semnn{a}\semn{b} = 0$ since $b$ is monotonic.
So
\begin{align*}
\semn{a\cdot b} = \semn{\delta(a)\cdot b} = \semn{\delta a}\semn{b} = (\semnn{a} + \semn{\delta a})\semn{b} = \semn{a}\semn{b},
\end{align*}
and similarly for $b < a$.

Otherwise, we are in the case of $\i(a) = \i(b)$ and $\ready(a)=\ready(b)$.
The interesting case is $\ready(a)=\ready(b)=\top$.
In this case, $\semnn{a\cdot b} = \semnn{a}\semnn{b}.$
Since $b$ is reduced, $\semnn{a}\semn{\delta(b)} = 0;$ otherwise we would have a state $b'\neq b, \reaches{b}{b'}$ with
$\ready(b')$ and $\i(b') = \i(a) = \i(b).$
Thus we calculate
\begin{align*}
 \semn{a\cdot b} &= \semnn{a\cdot b} + \semn{\delta(a\cdot b)} & \cref{eval-identity} \\
&= \semnn{a}\semnn{b} + \semn{\delta(a)\cdot b} & \textrm{(definition)} \\
&= \semnn{a}\semnn{b} + \semn{\delta(a)}\semn{b} & \textrm{(induction)} \\
&= \semnn{a}\semnn{b} + \semnn{a}\semn{\delta(b)} + \semn{\delta(a)}\semn{b} & \textrm{(add zero)}\\
&= (\semnn{a} + \semn{\delta(a)})\semn{b} & \cref{eval-identity} \\
&= \semn{a}\semn{b}. & \cref{eval-identity} \\
\end{align*}

Finally, in the fourth case both streams are \emph{not} ready.
Then $\semnn{a\cdot b} = 0$, we advance $a$, and after finitely many steps reach one of the preceding three cases or a terminal state before emitting anything.
\end{proof}

\paragraph{Note} This proof only requires that one of the two streams be reduced.
Thus, in a compound expression, if we can tolerate the result being non-reduced,
we can support at most one non-reduced input stream and still obtain correct results.
\subsubsection{Summation Correctness}
Summation correctness follows directly from correctness of binary stream addition.

\begin{theorem}
  For all $a,b\in\S_\alpha$, $$\semn{a + b} = \semn{a}+\semn{b},$$
  and $a+b$ is simple.
\end{theorem}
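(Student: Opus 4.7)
The plan is to follow the same two-step template used for the multiplication theorem: first verify that $a+b$ is simple, then prove the evaluation equation by induction on the number of transitions required to reach a terminal state.

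For simplicity, the three conditions split neatly. Finiteness follows because every transition of $a+b$ advances at least one of $a, b$, each of which terminates in finitely many steps. Monotonicity is immediate from $\i(a+b)=\min(\i(a),\i(b))$ and the fact that $\delta(a+b)$ only advances whichever components attain this minimum, so $\min$ can only increase. Reducedness is the subtlest: given ready states $(a',b')$ and $(a'',b'')$ with $\reaches{(a',b')}{(a'',b'')}$ sharing a common index $i$, I case-split on which of $a',b'$ are ready. When both components are ready at $i$, the trajectory between the two states only advances streams whose indices are at most $i$, so by monotonicity of $a$ and $b$ the pinned index forces $a'$ (resp.~$b'$) and $a''$ (resp.~$b''$) to lie on the same ready trajectory of the same input, and reducedness of $a$ and $b$ yields $a'=a''$ and $b'=b''$. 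The mixed-readiness case is handled by the strict $<$ branch of the pre-order, which advances the non-ready component first until both flags align at $i$, reducing to the previous case.

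For the evaluation equation, I induct on the step count to termination and case-split on the four branches of $\delta$. If $(a,b)$ is terminal, both sides vanish. When $a<b$ strictly, monotonicity of $b$ ensures $\semn{b}$ has no support below $\i(b)$, so $\semnn{a+b}=\semnn{a}$ at index $\i(a)$; applying the induction hypothesis to $(\delta a, b)$ together with the evaluation identity $\semn{q}=\semnn{q}+\semn{\delta q}$ then assembles $\semn{a+b}=\semn{a}+\semn{b}$. The case $b<a$ is symmetric. When $\i(a)=\i(b)$ and the two readiness flags agree, both streams advance; if both are ready the value contributions split cleanly as $\semnn{a+b}=\semnn{a}+\semnn{b}$, and if neither is ready all three singleton contributions vanish, so the induction hypothesis closes the argument either way.

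The main obstacle is the reducedness check, which requires careful bookkeeping over all combinations of ready flags between $a$ and $b$, particularly the asymmetry introduced at equal indices by the strict pre-order. Once reducedness is in hand, the evaluation argument follows the same pattern as the multiplication correctness proof with only minor adjustments for the $\min$-versus-$\max$ and disjunctive-versus-conjunctive readiness interchange.
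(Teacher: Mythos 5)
Your overall strategy is exactly the paper's: its entire proof of this theorem is the single sentence ``the proof follows essentially the same argument as multiplication,'' so your write-up is strictly more detailed than the published one, and your template (check the three simplicity conditions, then induct on the number of steps to termination with a case split on the branches of $\delta$) is the right one. The finiteness and monotonicity checks are fine, and the both-ready and neither-ready sub-cases of the evaluation argument go through as you describe.

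There is, however, one sub-case where ``same as multiplication'' does not transfer and your sketch glosses over it: $\i(a)=\i(b)$ with $\ready(a)=\bot$ and $\ready(b)=\top$, i.e.\ the $a<b$ branch taken because of the readiness tie-break rather than a strict index inequality. You fold this into ``$a<b$ strictly'' and argue $\semnn{a+b}=\semnn{a}$ because $\semn{b}$ has no support below $\i(b)$; but here $\i(b)$ equals the common index, so that argument says nothing. By the literal definitions, $\ready(a,b)=\top$ (the readiness of the sum is a disjunction) and $\v(a,b)$ includes the $\v(b)$ summand (its index attains the min), yet only $a$ is advanced --- so $\semnn{a+b}\neq\semnn{a}=0$, and the $\v(b)$ contribution risks being emitted again at $(\delta a, b)$. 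For multiplication this configuration is harmless because the conjunctive readiness makes the product state contribute $0$; for addition it is precisely where real work is needed, either by showing each ready state of $b$ is emitted exactly once along the trajectory or by reading the sum stream's value and readiness as gated on the readiness of each component. Your reducedness argument has the same soft spot: the flags need not ever ``align at $i$,'' since $a$ may pass through several non-ready states at index $i$ (producing several distinct ready sum-states at that index) before moving on. I would isolate this mixed-readiness-at-equal-index configuration as its own lemma rather than absorbing it into the strict-inequality case.
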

\begin{proof}
  The proof follows essentially the same argument as multiplication.
\end{proof}
\label{sec:summation-correct}

\begin{theorem}
  For all $q\in\streams{I}{V},$
  $$\semn{\sigop{q}} = \sigop{\semn{q}}.$$
\end{theorem}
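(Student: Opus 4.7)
The plan is to reduce $\sigop$-correctness to the binary addition correctness established just above, by induction on the length of the $\delta$-path from $q$ to its terminal state. The central tool is the pair of identities
\[\sigop q = \v(q) + \sigop(\delta q), \qquad \semn{q} = \semm{q} + \semn{\delta q},\]
the first of which unfolds \cref{def:summation1} once and the second being the \emph{important identity} from \cref{def:eval-v1}. Applying $\semn{-}$ to the first and invoking $\semn{a+b} = \semn{a} + \semn{b}$ from the preceding theorem yields $\semn{\sigop q} = \semn{\v(q)} + \semn{\sigop \delta q}$, so the task reduces to matching each summand with the corresponding piece on the variable side.

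On the variable side, linearity of $\sig{i}{-}$ gives $\sig{i}{\semn{q}} = \sig{i}{\semm{q}} + \sig{i}{\semn{\delta q}}$. The inductive hypothesis identifies $\sig{i}{\semn{\delta q}}$ with $\semn{\sigop \delta q}$, and it only remains to show $\sig{i}{\semm{q}} = \semn{\v(q)}$. This splits on $\ready(q)$: in the ready case $\semm{q} = (\i(q) \mapsto \semn{\v(q)})$, whose sum over $I_i$ is visibly $\semn{\v(q)}$; in the non-ready case the left side is $0$ and we need $\semn{\v(q)} = 0$ as well. The base case at a terminal state is a single instance of this same identity (with $\ready = \bot$ by the simple-stream convention). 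Finally, \cref{lem:map} lifts the conclusion from the base prefix $\alpha = []$ to the general definition $\sigop_i = \map_\alpha \sigop$.

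The main obstacle is precisely the non-ready subcase: $\sigop q$ as stated unconditionally totals $\v(r)$ over every reachable $r$, whereas variable-level summation discards the values at non-ready states. I would handle this by maintaining, as an auxiliary invariant across all combinator constructions, that on a simple stream the value $\v(r)$ at any non-ready reachable $r$ has $\semn{\v(r)} = 0$. At terminal states this follows from the finiteness clause of simplicity (together with the remark that every finite stream can be extended to satisfy it); at intermediate non-ready states it propagates compositionally through the product, sum, and replication constructors, since in each case $\v(r)$ is built from operand values and a non-ready combined output corresponds to a situation where the contributing operand is either non-ready or index-mismatched, both of which already evaluate to zero at the value level. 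Once this invariant is recorded, the induction closes in one line.
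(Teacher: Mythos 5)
Your overall route is the paper's route: both proofs reduce $\sigop$-correctness to the binary addition correctness theorem and then close the nested case with \cref{lem:map}. The paper simply writes the induction as a single manipulation of the sum $\sum_{\reaches{q}{r}}\v(r)$ --- distributing $\semn{-}$ across it via $\semn{a+b}=\semn{a}+\semn{b}$, rewriting each term as $(\i(r)\mapsto\semn{\v(r)})(\i(r))$, and exchanging the order of summation --- rather than unrolling one $\delta$-step at a time as you do. Your step-by-step version is equivalent, and you deserve credit for making explicit the point both arguments hinge on: $\sigop$ as written totals $\v(r)$ over \emph{every} reachable state, while $\semn{q}$ discards non-ready states, so one needs $\semn{\v(r)}=0$ whenever $\ready(r)=\bot$. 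The paper's proof relies on exactly this, silently, when it identifies $\sum_{\reaches{q}{r}}\i(r)\mapsto\semn{\v(r)}$ with $\semn{q}$ in its final step.

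The gap is in your proposed repair. The invariant ``$\semn{\v(r)}=0$ at every non-ready reachable $r$'' is \emph{not} preserved by the product combinator. Take $r=(a,b)$ with $\ready(a)=\ready(b)=\top$ but $\i(a)\neq\i(b)$: then $\ready(a,b)=\bot$, yet $\v(a,b)=\v(a)\cdot\v(b)$ evaluates to $\semn{\v(a)}\cdot\semn{\v(b)}$, which is generally nonzero --- so the ``index-mismatched'' subcase does not ``already evaluate to zero at the value level'' as you claim. (Concretely: two ready sparse-vector states holding values $2$ and $5$ at indices $1$ and $2$ yield a non-ready product state whose value evaluates to $10$.) The induction therefore does not close this way. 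The honest fix is instead to guard the accumulation in \cref{def:summation1} by readiness, i.e.\ $\sigop q=\sum_{\reaches{q}{r},\,\ready(r)}\v(r)$, which is what the implementation actually does (the \ttt{if1 v.ready} in \cref{fig:lean-loop}); with that reading, your non-ready subcase and your base case both become $0=0$, no auxiliary invariant is needed, and the rest of your argument goes through unchanged.
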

\begin{proof}

  \begin{align*}
  \semn{\sigop{q}}
    &= \left\llbracket \sum_{\reaches{q}{r}} \v(r)\right\rrbracket & \textrm{definition of $\sigop$} \\\\
    &= \sum_{\reaches{q}{r}}\semn{\v(r)} & \semn{a+b} = \semn{a}+\semn{b} \\
    &= \sum_{\reaches{q}{r}}(\i(r) \mapsto \semn{\v(r)})(\i(r)) &  \textrm{} \\
    &= \sum_{x\in I_i}\left(\sum_{\reaches{q}{r}}\i(r) \mapsto \semn{\v(r)}\right)(x) &  (*) \\
    &= \sum_{x\in I_i}\semn{q}(x)= \sigop{\semn{q}}. & \textrm{def. $\semn{-}$; def. $\sigop$ on variables} \\
  \end{align*}

  Step $(*)$ follows because each inner summand is non-zero for at most one value of $x$,
  so each step of the inner loop is picked out by some step of the outer loop.
  The general claim for $\sigop_i$ follows from \cref{lem:map}.
\end{proof}

\begin{theorem}
  If $q\in \S_S$ is simple, $\sigop_i(q)$ is simple
\end{theorem}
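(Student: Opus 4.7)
The plan is to exploit the definition $\sigop_i = \map_\alpha(\sigop)$, where $\alpha$ is the prefix of indices in $S$ preceding $i$. Since $\map_\alpha$ only transforms the innermost stream values and leaves the outer $|\alpha|$ levels of state space, successor, index, and ready functions entirely unchanged, the finiteness, monotonicity, and reducedness of $\sigop_i(q)$ at each of those outer levels are inherited verbatim from $q$. Thus the problem reduces to showing that the base summation operator $\sigop$ sends a simple stream in $\S_{i::\beta}$ to a simple element of $\S_\beta$.

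When $\beta = []$ we have $\S_\beta = R$ and the simplicity conditions are vacuous. Otherwise, $\sigop q = \sum_{\reaches{q}{r}} \v(r)$ is a finite sum (finite because $q$ is simple) computed in $\S_\beta$ by iterated stream addition. Two facts do all the work here: each value $\v(r)$ is itself a simple element of $\S_\beta$, because $q \in \S_{i::\beta}$ being simple carries simplicity to its values; and the theorem established just above shows that the binary sum $a + b$ of simple streams is again simple.

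I would then complete the argument by induction on the cardinality of the reachable set of $q$. In the base case $q$ is terminal, and $\sigop q = \v(q)$ is simple by hypothesis. For the inductive step $q$ is non-terminal, so $\sigop q = \v(q) + \sigop(\delta(q))$, which is the sum of a simple stream and — by the induction hypothesis applied to $\delta(q)$, whose reachable set is strictly smaller — another simple stream; the binary sum lemma concludes. The main thing to watch is well-foundedness of the induction on reachable-set cardinality, but this is guaranteed precisely by the finiteness component of simpleness. Beyond that verification, this is the most mechanical of the closure proofs in the section.
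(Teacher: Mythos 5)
Your argument is a correct and more detailed version of exactly what the paper does: the paper's proof is the one-line remark that the claim ``follows from simplicity of stream addition,'' relying on the preceding lemma that $a+b$ is simple and on the earlier map-lemma reduction to the $\alpha = []$ case, which is precisely the structure you spell out. Your explicit induction on the cardinality of the reachable set, unfolding $\sigop q = \v(q) + \sigop(\delta(q))$, is the missing detail the paper leaves implicit, and it is sound.
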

\begin{proof} \mbox{}
  Follows from simplicity of stream addition.
\end{proof}
This completes the proof that $\S$ is an algebra and that $\semnop : \S\to\T$ is a map of algebras.

\section{Related Work}

This paper proposes an operational semantics for generalized contractions and a correct-by-construction DSL compiler.
We discuss prior work on generalized contractions, related DSL systems and compilers, and work on verified compilation.

\paragraph{Generalized Variable Contraction Formulations}

Sparse tensor algebra, databases~\cite{shaikhha2018push}, factorized probability distributions~\cite{aji2000generalized}, weighted graphs~\cite{mattson2013standards}, and formal languages~\cite{Elliott2019-convolution-extended} can all be represented as vectors or higher-rank sparse tensors by choosing the underlying set of scalars appropriately, and such representations are conducive to algebraic restatements of many algorithms~\cite{abo2016faq}.
Moreover, these restatements can enable application of specialized fusion techniques such as worst-case optimal join methods~ \cite{ngo2018worst,veldhuizen2012leapfrog,schleich2019layered} and factorization techniques that perform asymptotically better on some queries.
Thus, formalisms that can uniformly represent algorithms that traverse these objects in streaming fashion have been shown to be useful in the design of optimizations and compiler backends to accelerate problems across new domains.

\paragraph{Compilers and Execution Systems}

Researchers have built systems and compilers for executing several computational languages that are sub-languages of the generalized variable contractions, including tensor algebra, relational algebra, and graph computations.

One line of recent work showed how to compile~\cite{kjolstad2017tensor,tian2021high,mlirsparse} and optimize~\cite{senanayake2020,kjolstad2019} arbitrary sparse tensor algebra expressions to fused code on several types of sparse and dense data structures~\cite{chou2018}. Moreover, Henry and Hsu et al.~\cite{henry2021compilation} showed how to compile general sparse array programs. Our work, however, generalizes these compilers to generalized contractions, including relations and graph computations. Moreover, our work provides a formal foundation for these systems and a proof of correctness, while generating equivalent code to them.

Execution systems for relational algebra~\cite{codd1971relational} as used in database management systems has flourished since System R~\cite{astrahan1976system} and INGRES~\cite{held1975ingres}.
Several libraries also provide relational algebra support, including Python pandas~\cite{panads} and SQLite~\cite{sqlite}.
Recently, Aberger et al. showed how to generate fused code for inner join expressions~\cite{aberger2017emptyheaded}.
Finally, researchers have shown how to combine dense tensor algebra with relational algebra~\cite{aberger2018levelheaded,yuan2020tensor}.
Our work is more general and can handle both dense and sparse tensor algebra, relational algebra, and more.
Furthermore, we provide a compiler approach that can generate bespoke fused code for general queries beyond inner joins.

Finally, programming systems for graph computing have become popular over the last decade~\cite{kulkarni2007optimistic,low2014graphlab,malewicz2010pregel,zhang2018graphit,shun2013ligra}.
These are typically programmed in a graph-based abstraction, but many of the algorithms they are used to implement can also be expressed as generalized contractions~\cite{kepner2011graph}.
Based on this observation, the GraphBLAS standard~\cite{kepner2016mathematical} was developed to express graphs algorithms in the language of linear algebra.
Unlike these graph systems, our work provides a formal model for the fused execution of graph algorithms expressed as generalized contractions as well as a correctness proof and a compiler that generates fused code.

\paragraph{Mechanical Verification of High Performance Systems}

There is much prior work on certified compilation of low-level languages.
Here we mention two examples:
CompCert~\cite{leroy2016compcert} is a certified, monolithic C compiler,
and Bedrock~\cite{chlipala2013bedrock} is an extensible system for verified systems programming.
These tools assist the programmer in building and verifying high-performance systems,
but focus on a lower level of abstraction where sparse array optimizations are difficult to express.

Other work has developed formal models and mechanized correctness proofs for \emph{dense} linear algebra compilation
~\cite{reinking2020formal, verpoly}.
These works focus on proving the correctness of various sophisticated optimization strategies for polyhedral programs.
These methods achieve high performance on dense problems, but
they do not apply to compressed data structures used to represent sparse data.
The present work is orthogonal in that it handles optimizations that are unique to compressed data structures,
but it can also express the dense iteration pattern.
For example, a compiler built according to stream semantics could make use of externally generated and verified polyhedral streams for dense sub-problems.

\citet{liu} show that it is possible to express many low-level optimizations on dense array programs
via verified source-to-source transformations of a high level functional language.
Moreover, they show that features of modern interactive proof assistants can be used to improve the productivity of algorithm designers and optimizers.
Our work shares the point of view that numeric computation systems can be made simultaneously simpler, more trustworthy, and more productive
via careful redesign of representations and optimization methods.
In contrast, we address sparse data structures and emphasize the variable contraction point of view.

The notion of modeling arrays as functions has been used in much prior work ~\cite{ragan2013halide, paszke2021getting} to support
optimized compilation of array programs.
The fact that natural join, which is essential to the expressivity of relational algebra,
can be redefined in terms of simple replication and element-wise multiplication (in this case, set intersection)
has been noted before~\cite{imielinski}.

\paragraph{Stream Programs and Stream Fusion}
Stream-fusion~\cite{kiselyov2017stream, coutts2007stream} and one-dimensional stream-based programming models~\cite{halbwachs1991synchronous, thies2002streamit} have been
an important topic in the functional programming community and used in a wide range of applications from embedded signal-processing to database query evaluation.
Our work describes higher-dimensional streams that are augmented with additional indexing parameters.
These indices have semantic content beyond being a proxy for time.
Indexed streams are related to (finite, hierarchical) maps in the same way that standard streams are related to lists,
which enables new composition methods.

\section{Conclusion}

We introduced the indexed stream formal operational semantics for the fused execution of variable contraction expressions.
Since the model hides details of fusion and sparse data structure iteration beneath a high-level functional expression language,
a programmer can focus on the computation they want and a provably correct programming system can handle the rest.
We hope that our indexed stream semantics will enable future certified compilers
for important computations across a wide variety of domains.

\begin{acks}
  We would like to thank
  Manya Bansal,
  Olivia Hsu,
  Matthew Sotoudeh,
  Shiv Sundram,
  Rohan Yadav,
  and Bobby Yan
  for their helpful feedback on earlier drafts.
  We would also like to thank Kyle Miller for his many patient explanations of Lean techniques.

  This work was in part supported by the National Science Foundation under Grant No. CCF-2216964.

\end{acks}

\bibliography{references}

\end{document}